\documentclass[11pt,reqno]{amsart}
\usepackage{amsfonts}
\usepackage{mathrsfs}
\textheight 23truecm \textwidth 17truecm \setlength{\topmargin}{-1.0
cm} \setlength{\oddsidemargin}{-0.5 cm}
\setlength{\evensidemargin}{-0.5cm} \pagestyle{plain}
\allowdisplaybreaks[4]
\usepackage{graphicx} 
\usepackage{epsfig}
\usepackage{amssymb}
\usepackage{amsmath}
\usepackage{cite}
\newtheorem{theorem}{Theorem}

\newtheorem{proposition}[theorem]{Proposition}

\newtheorem{lemma}[theorem]{Lemma}
\newcommand{\be}{\begin{equation}}
\newcommand{\ee}{\end{equation}}
\newcommand{\bea}{\begin{eqnarray}}
\newcommand{\eea}{\end{eqnarray}}
\newcommand{\ba}{\begin{array}}
\newcommand{\ea}{\end{array}}
\newcommand{\bean}{\begin{eqnarray*}}
\newcommand{\eean}{\end{eqnarray*}}

\newcommand{\pa}{\partial}

\begin{document}
\title{The Additional Symmetries for the BTL and CTL Hierarchies}
\author{Jipeng Cheng$^{1,2}$, Kelei Tian$^2$, Jingsong He$^{1*}$,
 }
\dedicatory {1\ Department of Mathematics, Ningbo University,
Ningbo, Zhejiang 315211, P.\ R.\ China \\
2\ Department of Mathematics, USTC, Hefei,Anhui 230026  , P.\ R.\
China}

\thanks{$^*$Corresponding author. email:hejingsong@nbu.edu.cn; jshe@ustc.edu.cn.}
\begin{abstract}
In this paper, we construct the additional symmetries for the Toda lattice (TL) hierarchies of B type and C type (the BTL and CTL hierarchies), and show their algebraic structures are $w_\infty^B\times w_\infty^B$ and $w_\infty^C\times w_\infty^C$ respectively. And also we discuss the generating functions of the additional symmetries.

\textbf{Keywords}: the BTL and CTL hierarchies, additional
symmetries

\textbf{MSC2010 numbers}: 17B80, 35Q51, 37K10, 37K30



\end{abstract}
\maketitle
\section{Introduction}
The Toda lattice (TL) hierarchy was first introduced by K.Ueno and K.Takasaki in \cite{uenotaksasai}
to generalize the Toda lattice equations\cite{toda}.
Along the work of E. Date, M. Jimbo, M. Kashiwara and T. Miwa \cite{DJKM} on the KP hierarchy, K.Ueno
 and K.Takasaki in \cite{uenotaksasai} develop the theory for the TL hierarchy: its algebraic structure, the linearization, the bilinear identity,
 $\tau$ function and so on. Also the analogues of the B and C types for the TL hierarchy, i.e.
  the BTL and CTL hierarchies, are considered in \cite{uenotaksasai}, which are corresponding to
  infinite dimensional Lie algebras $\textmd{o}(\infty)$ and $\textmd{sp}(\infty)$ respectively.
  In this paper, we will focus on the study of the additional symmetries for the BTL and
  CTL hierarchies.

Symmetries have been playing vital roles in the study of the
integrable system. The additional
symmetry\cite{OS86,D93,ASM94,vM94,ASM95,D95,AvM1999,takasaki1996} is
one of the most important symmetries, which has two different
expressions. One of its expressions can be traced back to the master
symmetry\cite{fokas1981,OF1982,F1983,Chen1983,fokas1987,fokas1988a,fokas1988b}.
As an interesting generalization of usual symmetries of partial
differential equations, master symmetries are introduced in
references \cite{fokas1981} and further developed in
references\cite{OF1982,F1983,Chen1983,fokas1987,fokas1988a,fokas1988b}.
The master symmetries are usually for the given explicit soliton
equations, whose remarkable feature is its depending explicitly on
the space $x$ and time $t$ variables for $1+1$-dimensional case.
When considering the integrable hierarchies, the master symmetries
are usually called additional symmetries\cite{OS86}. For the KP
hierarchy, the corresponding additional symmetries are studied in
references \cite{D93,ASM94,vM94,ASM95,D95}, which can be used to
form $w_{\infty}$ algebra when acting on the linear problem, while
the additional symmetry for the TL hierarchy is investigated in
references \cite{ASM95,AvM1999,takasaki1996}, which forms
$w_{\infty}\times w_{\infty}$ algebra when acting on the linear
problem. The other expression for the additional symmetry is in the
form of Sato B\"acklund transformation
  \cite{DJKM} defined by the vertex operator $X(\lambda,\mu)$ acting on the $\tau$ function.
   These two different expressions can be linked by so-called Adler-Shiota-van Moerbeke (ASvM)
    formulas \cite{ASM94,ASM95,vM94,AvM1999,D95} for the continuous integrable systems (the KP
     hierarchy) and also the discrete ones (the TL hierarchy). By the ASvM formulas, the associated
      algebra of additional symmetries can be lifted to its central
extension, the algebra of B\"acklund symmetries. The additional symmetries are involved in so-called
string equation and the generalized Virasoro constraints in matrix models of the 2d quantum gravity
 (see \cite{D93,vM94,takasaki1996} and references therein).

 There are several interesting results about the additional symmetries for the BKP and CKP
 hierarchies from the views of the possible applications related to the string equations. For the
 BKP hierarchy,  the corresponding additional symmetry are constructed by Takasaki
 \cite{takasaki1993} in the operator form, and Virasoro constraints and the ASvM formula have been
 studied by Johan van de  Leur \cite{L95,L96} using an algebraic formalism. Recently, Tu \cite{Tu07}
  gave an alternative proof of the ASvM formula of the BKP hierarchy by using Dickey¡¯s method
  \cite{D95}. And the corresponding string equation was also constructed by Tu in \cite{Tu08}.
As for the CKP hierarchy, the additional symmetry and string equation were well constructed by He
in \cite{he2007}. Inspired by these works and the relation between the TL hierarchy and the
KP hierarchy, we shall in this paper establish the additional symmetries for the BTL and CTL
hierarchies, and then investigate their corresponding algebraic structures, and study
 some interesting properties of them.

This paper is organized in the following way. In Section 2, we recall some basic knowledge about the
BTL and CTL hierarchies. Then, we construct the additional symmetry  for the BTL hierarchy and give
their algebraic structure in Section 3. Next, in Section 4, the additional symmetry for the CTL
hierarchy is also investigated and the corresponding algebraic structure is shown. At last, we
devote Section 5 to some conclusions and discussions.

\section{the BTL and CTL hierarchies}
In this section, we will review some basic facts about the BTL and CTL hierarchies in the style of
 Adler \& van Moerbeke\cite{ASM95,AvM1999}. One can refer to \cite{uenotaksasai} for
 more details about the BTL and CTL hierarchies.

First, consider the algebra
$$ \mathscr{D}=\{(P_1,P_2)\in\textmd{ gl}((\infty))\times \textmd{gl}((\infty))\ | \ (P_1)_{ij}=0 \ \textmd{ for}\  j-i\gg0, \ (P_2)_{ij}=0 \ \textmd{for} \  i-j\gg0\},$$
which has the following splitting:
\begin{eqnarray*}
\mathscr{D}&=&\mathscr{D}_+ +\mathscr{D}_-,\\
\mathscr{D}_+&=&\{(P,P)\in \mathscr{D}\ | \ (P)_{ij}=0 \ \textmd{for} \  |i-j|\gg 0\}=\{(P_1,P_2)\in \mathscr{D}\ | \ P_1=P_2\},\\
\mathscr{D}_-&=&\{(P_1,P_2)\in \mathscr{D}\ | \ (P_1)_{ij}=0\ \textmd{for}\ j\geq i, \ (P_2)_{ij}=0 \ \textmd{for} \  i> j\},
\end{eqnarray*}
with $(P_1,P_2)=(P_1,P_2)_+ +(P_1,P_2)_-$ given by
$$(P_1,P_2)_+=(P_{1u}+P_{2l},P_{1u}+P_{2l}),\ (P_1,P_2)_-=(P_{1l}-P_{2l},P_{2u}-P_{1u}),$$
where for a matrix $P$, $P_u$ and $P_l$ denote the upper (including diagonal) and strictly lower triangular parts of $P$, respectively. For $(P_1,P_2),(Q_1,Q_2)\in  \mathscr{D}$, we define $$(P_1,P_2)(Q_1,Q_2)=(P_1Q_1,P_2Q_2),\quad (P_1,P_2)^{-1}=(P_1^{-1},P_2^{-1}).$$

Then the BTL (or CTL) hierarchy is defined in the Lax forms as
\begin{equation}\label{bctlhierarchy}
    \pa_{x_{2n+1}}L=[(L^{2n+1}_1,0)_+,L]\ \ \ \textmd{and}\ \ \ \pa_{y_{2n+1}}L=[(0,L_2^{2n+1})_+,L],\ \ \ n=0,1,2,\cdots
\end{equation}
where the Lax operator $L$ is given by  a pair of infinite matrices
\begin{equation}\label{laxoperator}
    L=(L_1,L_2)=\Big(\sum _{-\infty<i\leq 1}\textmd{diag}[a_i^{(1)}(s)]\Lambda^i,\sum _{-1\leq i<\infty}\textmd{diag}[a_i^{(2)}(s)]\Lambda^i\Big)\in  \mathscr{D}
\end{equation}
with $\Lambda=(\delta_{j-i,1})_{i,j\in \mathbb{Z}}$,
and $a_1^{(k)}(s)$ and  $a_2^{(k)}(s)$ depending on $x=(x_1,x_3,x_5,\cdots)$ and $y=(y_1,y_3,y_5,\cdots)$, such that
$$a_1^{(1)}(s)=1 \ \ \ and\ \ \ a_2^{(1)}(s)\neq 0\ \ \ \forall s$$
and satisfies the BTL (or CTL) constraint\cite{uenotaksasai}
\begin{equation}\label{bctlconstr}
    L^T=-(J,J)L(J^{-1},J^{-1})\ \left(\textmd{or}\ L^T=-(K,K)L(K^{-1},K^{-1})\right),
\end{equation}
where $J=((-1)^i\delta_{i+j,0})_{i,j\in\mathbb{Z}}$, $K=\Lambda J$ and $T$ refers to the matrix transpose.

The Lax operator of the BTL (or CTL) hierarchy (\ref{bctlhierarchy}) has the representation
\begin{equation}\label{laxwaveexpression}
    L:=W(\Lambda,\Lambda^{-1})W^{-1}=S(\Lambda,\Lambda^{-1})S^{-1}
\end{equation}
in terms of two pairs of wave operators $W=(W_1,W_2)$ and $S=(S_1,S_2)$, where
\begin{eqnarray}
S_1(x,y)=\sum_{i\geq 0} \textmd{diag}[c_i(s;x,y)] \Lambda^{-i},&&S_2(x,y)=\sum_{i\geq 0} \textmd{diag}[c_i'(s;x,y)] \Lambda^{i} \label{smatrices}
\end{eqnarray}
and
\begin{equation}\label{wmatrices}
    W_1(x,y)=S_1(x,y)e^{\xi(x,\Lambda)},\quad W_2(x,y)=S_2(x,y)e^{\xi(y,\Lambda^{-1})}
\end{equation}
with $c_0(s;x,y)=1$ and $c_o'(s;x,y)\neq 0$ for any $s$, and $\xi(x,\Lambda^{\pm1})=\sum_{n\geq0}x_{2n+1}\Lambda^{\pm 2n+1}$. Obviously,
$W=(W_1,W_2)$ are not uniquely determined, but have the arbitrariness
$$W_1(x,y)\mapsto W_1(x,y)f^{1}(\Lambda),\quad W_2(x,y)\mapsto W_2(x,y)f^{2}(\Lambda).$$
Here $f^{1}(\lambda)=\sum_{i\geq0}f_i^1\lambda^{-i}$ and $f^{2}(\lambda)=\sum_{i\geq0}f_i^2\lambda^{i}$ ($f_0^1=1$, $f_0^2\neq0$) are formal Laurent series with constant scalar coefficients. Under an appropriate choice of $f_i(\lambda)$, $W=(W_1,W_2)$ satisfies
\begin{equation}\label{bcwconstraints}
    J^{-1}W_i^TJ=W_i^{-1}\ \textmd{for}\ \textmd{BTL}\ (\ \textmd{or}\ K^{-1}W_i^TK=W_i^{-1}\ \textmd{for}\ \textmd{CTL}), i=1,2.
\end{equation}
The wave operators evolve according to
\begin{eqnarray}
\pa_{x_{2n+1}}S&=&-(L_1^{2n+1},0)_-S,\quad \pa_{y_{2n+1}}S=-(0,L_2^{2n+1})_-S,\label{sevolution}\\
\pa_{x_{2n+1}}W&=&(L_1^{2n+1},0)_+W,\quad \pa_{y_{2n+1}}W=(0,L_2^{2n+1})_+W.\label{wevolution}
\end{eqnarray}

The vector wave functions $\Psi=(\Psi_1,\Psi_2)$ and the adjoint wave functions $\Psi^*=(\Psi_1^*,\Psi_2^*)$, can also be introduced as
\begin{eqnarray}
  \Psi_i(x,y;\lambda)&=& (\Psi_i(n;x,y;\lambda))_{n\in\mathbb{Z}}:=W_i(x,y)\chi(\lambda),\label{wavefunction} \\
  \Psi_i^*(x,y;\lambda)&=&(\Psi_i^*(n;x,y;\lambda))_{n\in\mathbb{Z}}:=(W_i(x,y)^{-1})^T\chi^*(\lambda),\label{adjointwavefunction}
\end{eqnarray}
with $\chi(\lambda)=(\lambda^i)_{i\in\mathbb{Z}}$, and $\chi^*(\lambda)=\chi(\lambda^{-1})$, which satisfy the following differential equations:
\begin{eqnarray}
\pa_{x_{2n+1}}\Psi&=&(L_1^{2n+1},0)_+\Psi,\quad \pa_{y_{2n+1}}\Psi=(0,L_2^{2n+1})_+\Psi,\label{vectorwavefunctionequation}\\
\pa_{x_{2n+1}}\Psi^*&=&-((L_1^{2n+1},0)_+)^T\Psi^*,\quad \pa_{y_{2n+1}}\Psi^*=-((0,L_2^{2n+1})_+)^T\Psi^*.\label{vectoradjointwavefunctionequation}
\end{eqnarray}

At last, in order to define the additional symmetries of the BTL and CTL hierarchies in next
sections, it is necessary to introduce the Orlov-Shulman operator\cite{OS86,ASM95}
\begin{eqnarray}
M&=&(M_1,M_2):=(W_1\varepsilon W_1^{-1},W_2\varepsilon^* W_2^{-1}),\label{osoperator}
\end{eqnarray}
where $\varepsilon=\textmd{diag}(i)_{i\in\mathbb{Z}}\cdot\Lambda^{-1}$ and $\varepsilon^*=-J\varepsilon J^{-1}$. These operators satisfy
\begin{eqnarray}
L\Psi&=&(\lambda,\lambda^{-1})\Psi,\quad M\Psi=(\pa_\lambda,\pa_{\lambda^{-1}})\Psi,\quad [L,M]=(1,1),\label{lmactonwave}
\end{eqnarray}
and
\begin{equation}\label{mevolution}
    \pa_{x_{2n+1}}M=[(L^{2n+1}_1,0)_+,M],\quad \pa_{y_{2n+1}}M=[(0,L_2^{2n+1})_+,M].
\end{equation}

\section{The Additional Symmetry for the BTL Hierarchies}
In this section, we shall construct the additional symmetry for the BTL hierarchy. Just as the case of the TL hierarchy, the Orlov-Shulman operators given by (\ref{osoperator}) still play an important role.

In this case,
we can still similarly introduce the additional independent variables
$x_{m,l}^*$ and $y_{m,l}^*$, and define the additional flows for the BTL hierarchy as
\begin{equation}\label{btladdsym}
    \pa_{x_{m,l}^*}W:=-(A_{1ml}(M_1,L_1),0)_-W,\quad \pa_{y_{m,l}^*}W:=-(0,A_{2ml}(M_2,L_2))_-W,
\end{equation}
where $A_{iml}(M_i,L_i)$ are polynomials in
$L_i$ and $M_i$ and their explicit forms
will be determined next.

The action on ${\Psi}$,
${L}$ and ${M}$ of the additional flows is
given by the following proposition.
\begin{proposition}\label{propaddisymmtodab1}
\begin{eqnarray}
\pa_{x_{m,l}^*}{\Psi}=-(A_{1ml}({M}_1,{L}_1),0)_-{\Psi},&& \pa_{y_{m,l}^*}{\Psi}=-(0,A_{2ml}({M}_2,{L}_2))_-{\Psi},\label{btlwaveaddsymm}\\
\pa_{x_{m,l}^*}{L}=[-(A_{1ml}({M}_1,{L}_1),0)_-,{L}],&& \pa_{y_{m,l}^*}{L}=[-(0,A_{2ml}({M}_2,{L}_2))_-,{L}],\label{btllaxwaveaddsymm}\\
\pa_{x_{m,l}^*}{M}=[-(A_{1ml}({M}_1,{L}_1),0)_-,{M}],&&
\pa_{y_{m,l}^*}{M}=[-(0,A_{2ml}({M}_2,{L}_2))_-,{M}].\label{btloswaveaddsymm}
\end{eqnarray}
\end{proposition}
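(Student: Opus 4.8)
The plan is to derive all three families of formulas directly from the defining action (\ref{btladdsym}) of the additional flows on the wave operators, using only that $\chi(\lambda)$, the shift pair $(\La,\La^{-1})$, and the constant matrices $\varepsilon,\varepsilon^*$ carry no dependence on the additional variables $x_{m,l}^*,y_{m,l}^*$. Throughout I abbreviate the generator of the $x_{m,l}^*$-flow by $B:=-(A_{1ml}(M_1,L_1),0)_-$, so that (\ref{btladdsym}) reads $\pa_{x_{m,l}^*}W=BW$; the $y$-direction is handled identically with $B$ replaced by $-(0,A_{2ml}(M_2,L_2))_-$, and I will write out only the $x$-side.

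First, the wave-function formula (\ref{btlwaveaddsymm}): since $\Psi=W\chi(\lambda)$ by (\ref{wavefunction}) and $\chi(\lambda)$ does not depend on $x_{m,l}^*$, differentiating gives $\pa_{x_{m,l}^*}\Psi=(\pa_{x_{m,l}^*}W)\chi(\lambda)=BW\chi(\lambda)=B\Psi$, which is the claim.

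Next, for the Lax formula (\ref{btllaxwaveaddsymm}) I use $L=W(\La,\La^{-1})W^{-1}$ from (\ref{laxwaveexpression}). Differentiating the identity $WW^{-1}=(1,1)$ yields $\pa_{x_{m,l}^*}(W^{-1})=-W^{-1}(\pa_{x_{m,l}^*}W)W^{-1}=-W^{-1}B$, and then the product rule together with the constancy of $(\La,\La^{-1})$ gives
\[
\pa_{x_{m,l}^*}L=(\pa_{x_{m,l}^*}W)(\La,\La^{-1})W^{-1}+W(\La,\La^{-1})\pa_{x_{m,l}^*}(W^{-1})=BL-LB=[B,L].
\]
The Orlov--Shulman formula (\ref{btloswaveaddsymm}) follows in exactly the same manner from $M=W(\varepsilon,\varepsilon^*)W^{-1}$ in (\ref{osoperator}), since $\varepsilon,\varepsilon^*$ are likewise constant along the flow: $\pa_{x_{m,l}^*}M=BM-MB=[B,M]$.

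I expect no genuine obstacle in the three identities themselves; they are the standard \emph{dressing} computation and reduce to the product rule plus the differentiation-of-inverse identity. The only points deserving care are bookkeeping ones that are logically upstream of this statement: that $B\in\mathscr{D}_-$ under the splitting $\mathscr{D}=\mathscr{D}_++\mathscr{D}_-$, so that each bracket $[B,L]$, $[B,M]$ again lies in $\mathscr{D}$; and that $B$ depends on the unknown through $M_1,L_1$, so that (\ref{btladdsym}) defines a genuine (nonlinear) flow whose linearized action on $\Psi,L,M$ is what we compute. The substantive work --- fixing the polynomials $A_{iml}$ and checking that these flows commute with the hierarchy flows of (\ref{wevolution}) --- lies beyond this proposition.
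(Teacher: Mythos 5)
Your proposal is correct and follows essentially the same route as the paper: the wave-function identity is read off from $\Psi=W\chi(\lambda)$, and the formulas for $L$ and $M$ come from differentiating the dressing expressions $L=W(\Lambda,\Lambda^{-1})W^{-1}$ and $M=W(\varepsilon,\varepsilon^*)W^{-1}$ with the rule $\pa(W^{-1})=-W^{-1}(\pa W)W^{-1}$, exactly as in the paper's computation. The only difference is cosmetic: you abbreviate the generator as $B$ and invoke the product rule abstractly, while the paper writes the same steps out in full notation.
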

\begin{proof}
  Firstly, (\ref{btlwaveaddsymm}) follows from
(\ref{wavefunction}) and
(\ref{btladdsym}). Then for the action on ${L}$, according to
(\ref{laxwaveexpression}) and
(\ref{btladdsym}),
\begin{eqnarray*}
\pa_{x_{m,l}^*}{L}&=&
(\pa_{x_{m,l}^*}{W})\Lambda
{W}^{-1}-{W}
\Lambda{W}^{-1}(\pa_{x_{m,l}^*}{W}){W}^{-1}\\
&=&
-(A_{1ml}({M}_1,{L}_1),0)_-{W}\Lambda
{W}^{-1}+{W}
\Lambda{W}^{-1}(A_{1ml}({M}_1,{L}_1),0)_-{W}{W}^{-1}\\
&=&
-(A_{1ml}({M}_1,{L}_1),0)_-{L}+{L}
(A_{1ml}({M}_1,{L}_1),0)_-\\
&=& [-(A_{1ml}({M}_1,{L}_1),0)_-,{L}].
\end{eqnarray*}
And $\pa_{y_{m,l}^*}{L}=[-(0,A_{2ml}({M}_2,{L}_2))_-,{L}]$
can be similarly derived.

At last, by (\ref{osoperator}), a similar proof leads to
(\ref{btloswaveaddsymm}).
\end{proof}
Next, we will show (\ref{btladdsym}) is indeed the symmetry flow of the BTL hierarchy in the
proposition below, and thus it is called the {\it additional symmetry}.

\begin{proposition}\label{propaddisymmtodabc2}
$$[\pa_{x_{m,l}^*},\pa_{x_{2n+1}}]=[\pa_{y_{m,l}^*},\pa_{x_{2n+1}}]=[\pa_{x_{m,l}^*},\pa_{y_{2n+1}}]=[\pa_{y_{m,l}^*},\pa_{y_{2n+1}}]=0.$$
\end{proposition}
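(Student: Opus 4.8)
The plan is to verify, for each of the four flow combinations, the corresponding zero-curvature (Zakharov--Shabat) condition on the wave operator $W=(W_1,W_2)$. Since $L$, $M$ and $\Psi$ are all built from $W$ via (\ref{laxwaveexpression}), (\ref{osoperator}) and (\ref{wavefunction}), and $W$ is invertible, it suffices to prove $[\partial_a,\partial_b]W=0$ in each case. I would write both families of flows in the uniform ``dressing'' form $\partial_a W=X_a W$, where $X_a=(\mathbb{P})_+$ with $\mathbb{P}=(L_1^{2n+1},0)$ or $(0,L_2^{2n+1})$ is the Toda generator (see (\ref{wevolution})) and $X_b=-(\mathbb{A})_-$ with $\mathbb{A}=(A_{1ml}(M_1,L_1),0)$ or $(0,A_{2ml}(M_2,L_2))$ is the additional generator (see (\ref{btladdsym})). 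The Leibniz rule then gives
\[
[\partial_a,\partial_b]W=\big(\partial_a X_b-\partial_b X_a-[X_a,X_b]\big)W,
\]
so the whole statement reduces to the single operator identity $\partial_a X_b-\partial_b X_a-[X_a,X_b]=0$.

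Next I would compute the two transport terms from the evolutions already established. Under a Toda flow the pair $L$ evolves by (\ref{bctlhierarchy}) and $M$ by (\ref{mevolution}), both as $[(\mathbb{P})_+,\,\cdot\,]$; since $(\mathbb{P})_+\in\mathscr{D}_+$ is a \emph{diagonal} pair $(Q,Q)$, it acts identically on each slot, so any polynomial pair $\mathbb{A}$ in $(M,L)$ transports as $\partial_a\mathbb{A}=[(\mathbb{P})_+,\mathbb{A}]$, whence $\partial_a X_b=-([(\mathbb{P})_+,\mathbb{A}])_-$. Symmetrically, under an additional flow Proposition \ref{propaddisymmtodab1} (equations (\ref{btllaxwaveaddsymm})--(\ref{btloswaveaddsymm})) shows $L$ and $M$ evolve as $[-(\mathbb{A})_-,\,\cdot\,]$, so the polynomial pair $\mathbb{P}$ transports as $\partial_b\mathbb{P}=[-(\mathbb{A})_-,\mathbb{P}]$ and $\partial_b X_a=-([(\mathbb{A})_-,\mathbb{P}])_+$. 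Here I use that the index conditions defining $(\,\cdot\,)_\pm$ are independent of the flow variables, so projection commutes with $\partial_a$ and $\partial_b$.

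The final step is the cancellation. Substituting $X_a=(\mathbb{P})_+$, $X_b=-(\mathbb{A})_-$ and setting $Z=\partial_a X_b-\partial_b X_a-[X_a,X_b]$, I would expand
\[
Z=-([(\mathbb{P})_+,\mathbb{A}])_-+([(\mathbb{A})_-,\mathbb{P}])_++[(\mathbb{P})_+,(\mathbb{A})_-]
\]
and use that $\mathscr{D}=\mathscr{D}_++\mathscr{D}_-$ is a direct sum of Lie \emph{sub}algebras: from $[(\mathbb{P})_+,(\mathbb{A})_+]\in\mathscr{D}_+$ and $[(\mathbb{A})_-,(\mathbb{P})_-]\in\mathscr{D}_-$ one gets $([(\mathbb{P})_+,\mathbb{A}])_-=([(\mathbb{P})_+,(\mathbb{A})_-])_-$ and $([(\mathbb{A})_-,\mathbb{P}])_+=-([(\mathbb{P})_+,(\mathbb{A})_-])_+$, so the three pieces telescope to $Z=0$. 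Because all four brackets in the statement have exactly this shape — differing only in whether $\mathbb{A}$ sits in the first or second slot and whether $\mathbb{P}$ is the $x$- or $y$-Toda generator — one such computation settles all of them, the crossed cases $[\partial_{y_{m,l}^*},\partial_{x_{2n+1}}]$ and $[\partial_{x_{m,l}^*},\partial_{y_{2n+1}}]$ being identical once the diagonal property of $(\mathbb{P})_+$ is invoked.

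I expect the main obstacle to be the bookkeeping with the nonstandard pair-splitting rather than any deep difficulty. One must confirm that $\mathscr{D}_\pm$ are closed under the bracket and, above all, that the two transport identities $\partial_a\mathbb{A}=[(\mathbb{P})_+,\mathbb{A}]$ and $\partial_b\mathbb{P}=[-(\mathbb{A})_-,\mathbb{P}]$ hold at the level of pairs in $\mathscr{D}$, not merely componentwise. The former is the delicate point, since $\mathbb{A}$ has a zero entry in one slot while $L$ and $M$ evolve in both; it works precisely because $(\mathbb{P})_+$ is diagonal and so commutes trivially with the zero slot. Once these structural facts are secured, the remaining argument is the routine Adler--Kostant--Symes/Radul telescoping summarized above.
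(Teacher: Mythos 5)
Your proposal is correct and is essentially the paper's own argument: the paper expands $[\pa_{x_{m,l}^*},\pa_{x_{2n+1}}]W$ term by term using exactly the same evolution equations (\ref{bctlhierarchy}), (\ref{wevolution}), (\ref{mevolution}), (\ref{btladdsym}), (\ref{btllaxwaveaddsymm}) and cancels via $(P_+)_-=(P_-)_+=0$ together with the subalgebra closure of $\mathscr{D}_\pm$, which is precisely your telescoping of $Z$. The only difference is presentational — you package the computation as a single zero-curvature identity $\pa_a X_b-\pa_b X_a-[X_a,X_b]=0$ covering all four cases uniformly, whereas the paper writes out one case explicitly and notes that the others are almost the same.
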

\begin{proof} By  equations (\ref{bctlhierarchy}),(\ref{wevolution}),
(\ref{mevolution}),(\ref{btladdsym}) and (\ref{btllaxwaveaddsymm})
\begin{eqnarray*}
[\pa_{x_{m,l}^*},\pa_{x_{2n+1}}] {W}&=& \pa_{x_{m,l}^*}(({L}_1^{2n+1},0)_+{W})+\pa_{x_{2n+1}}((A_{1ml}({M}_1,{L}_1),0)_-{W})\\
&=&-[(A_{1ml}({M}_1,{L}_1),0)_-, ({L}_1^{2n+1},0)]_+{W}+[({L}_{1}^{2n+1},0)_+,(A_{1ml}({M}_1,{L}_1),0)]_-{W}\\
&&-({L}_1^{2n+1},0)_+(A_{1ml}({M}_1,{L}_1),0)_-{W}+(A_{1ml}({M}_1,{L}_1),0)_-({L}_1^{2n+1},0)_+{W}\\
&=& [({L}_1^{2n+1},0)_+,(A_{1ml}({M}_1,{L}_1),0)_- ]_+{W}+[({L}_1^{2n+1},0)_-,(A_{1ml}({M}_1,{L}_1),0)_- ]_+{W}\\
&&+[({L}_{1}^{2n+1},0)_+,(A_{1ml}({M}_1,{L}_1),0)]_-{W}
-[({L}_1^{2n+1},0)_+,(A_{1ml}({M}_1,{L}_1),0)_-]{W}\\
&=& [({L}_1^{2n+1},0)_+,(A_{1ml}({M}_1,{L}_1),0)_- ]{W}
-[({L}_1^{2n+1},0)_+,(A_{1ml}({M}_1,{L}_1),0)_-]{W}\\
&=&0.
\end{eqnarray*}
Note that, the term $[({L}_1^{2n+1},0)_-,(A_{1ml}({M}_1,{L}_1),0)_- ]_+$ in the third equality
and $[({L}_{1}^{2n+1},0)_+,$ \\$(A_{1ml}({M}_1,{L}_1),0)_+]_-$ vanish,
since $(P_-)_+=0$ and $(P_+)_-=0$ for $P\in\mathscr{D}$.

The proofs for others are almost the same.
\end{proof}
So now, the remaining work is to determine the explicit forms of $A_{iml}(i=1,2)$. Before
doing this,  a useful lemma is introduced first.
\begin{lemma}\label{btllemma}
\begin{eqnarray}
\Lambda^{-1}\varepsilon \Lambda=J^{-1}\varepsilon^T J,&&\Lambda\varepsilon^* \Lambda^{-1}=J^{-1}\varepsilon^{*T} J.\label{blemma}
\end{eqnarray}
\end{lemma}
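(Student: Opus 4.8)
The plan is to reduce both identities to a short operator-level computation, bypassing any lengthy entrywise comparison by first isolating the basic structural relations satisfied by the three matrices that build $\varepsilon$ and $\varepsilon^*$, namely the shift $\Lambda$, the diagonal $D:=\textmd{diag}(i)_{i\in\mathbb{Z}}$, and the reflection $J$. Writing $\varepsilon=D\Lambda^{-1}$ and recalling $\varepsilon^*=-J\varepsilon J^{-1}$, I would first record four elementary facts, each checked directly on matrix entries: (i) $J^2=I$, so $J^{-1}=J$; (ii) $\Lambda^T=\Lambda^{-1}$ (the shift is orthogonal), hence $(\Lambda^{-1})^T=\Lambda$; (iii) $JDJ^{-1}=-D$; and (iv) $J\Lambda J^{-1}=-\Lambda^{-1}$, whence, taking inverses, $J\Lambda^{-1}J^{-1}=-\Lambda$.

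Next I would simplify the two left-hand sides. The first collapses at once, since $\Lambda^{-1}\varepsilon\Lambda=\Lambda^{-1}(D\Lambda^{-1})\Lambda=\Lambda^{-1}D$. For the second, using (i),(iii),(iv) one finds $\varepsilon^*=-JD\Lambda^{-1}J=-(JDJ)(J\Lambda^{-1}J)=-(-D)(-\Lambda)=-D\Lambda$, so that $\Lambda\varepsilon^*\Lambda^{-1}=\Lambda(-D\Lambda)\Lambda^{-1}=-\Lambda D$.

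Then I would treat the right-hand sides via the transpose. From $\varepsilon=D\Lambda^{-1}$ and (ii), $\varepsilon^T=(\Lambda^{-1})^TD=\Lambda D$, so $J^{-1}\varepsilon^TJ=J(\Lambda D)J=(J\Lambda J)(JDJ)=(-\Lambda^{-1})(-D)=\Lambda^{-1}D$, which matches the first left-hand side. Likewise $\varepsilon^{*T}=(-D\Lambda)^T=-\Lambda^{-1}D$, and $J^{-1}\varepsilon^{*T}J=-(J\Lambda^{-1}J)(JDJ)=-(-\Lambda)(-D)=-\Lambda D$, which matches the second left-hand side. Both identities in (\ref{blemma}) follow.

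The only delicate point — and the place where I expect the bookkeeping to concentrate — is verifying relations (iii) and (iv), because conjugation by $J$ superposes the index reflection $i\mapsto-i$ encoded in $\delta_{i+j,0}$ with the sign $(-1)^i$, and one must track carefully that the two sign factors collapse correctly (using $(-1)^{-i}=(-1)^i$ together with $J^2=I$) to produce the single overall minus sign in each case. Everything else is routine substitution requiring nothing beyond the definitions of $\Lambda$, $D$, $J$ and the identity $J^{-1}=J$.
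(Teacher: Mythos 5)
Your proof is correct, and it is organized genuinely differently from the paper's. The paper proves the first identity by a direct entrywise comparison of the composite matrices: it computes $(\Lambda^{-1}\varepsilon\Lambda)_{i,j}=(i-1)\delta_{i-1,j}$ and shows $(J^{-1}\varepsilon^T J)_{i,j}$ gives the same answer, with the sign bookkeeping $(-1)^{i+j}\delta_{-i+1,-j}=-\delta_{i-1,j}$ done inside that one computation; it then gets the second identity from the first by a transposition trick, namely transposing $\Lambda^{-1}\varepsilon\Lambda=J^{-1}\varepsilon^T J$ to obtain $\Lambda^{-1}\varepsilon^T\Lambda=J\varepsilon J^{-1}=-\varepsilon^*$ and substituting this back into $J^{-1}\varepsilon^{*T}J$. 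You instead confine all entrywise work to four elementary facts about the building blocks, $J^2=I$, $\Lambda^T=\Lambda^{-1}$, $JDJ^{-1}=-D$, $J\Lambda J^{-1}=-\Lambda^{-1}$, and then reduce both sides of both identities by pure operator algebra to the common normal forms $\Lambda^{-1}D$ and $-\Lambda D$; the two identities are treated symmetrically and independently, neither being derived from the other. Both routes are sound and of comparable length. Yours buys modularity and explicit closed forms ($\varepsilon^T=\Lambda D$, $\varepsilon^*=-D\Lambda$, $\varepsilon^{*T}=-\Lambda^{-1}D$), and your conjugation relations (iii)--(iv) are directly reusable: together with $K=\Lambda J$ they give an immediate independent proof of the CTL analogue, Lemma \ref{clemma1}, which the paper instead deduces from (\ref{blemma}). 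The paper's route buys the observation that the second identity is essentially the transpose of the first, so it costs no additional entrywise computation once the first is established. One incidental benefit of your version: it makes transparent that the sign inconsistency in the paper's proof (the parenthetical ``remembering $\varepsilon^*:=J\varepsilon J^{-1}$'', which contradicts the definition $\varepsilon^*=-J\varepsilon J^{-1}$ of Section 2) is a typo and not a substantive issue, since your computation uses only the Section 2 definition throughout.
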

\begin{proof}
By recalling $\varepsilon:=\textmd{diag}[s]\Lambda^{-1}$ and $J=((-1)^i\delta_{i+j,0})_{i,j\in\mathbb{Z}}=J^{-1}$, we compare the $(i,j)$-entries on each side of the first relation in (\ref{blemma}). For the left side,
\begin{eqnarray*}
(\Lambda^{-1}\varepsilon \Lambda)_{i,j}&=&(\Lambda^{-1}\textmd{diag}[s]\Lambda^{-1} \Lambda)_{i,j}\\
&=&(\textmd{diag}[s-1]\Lambda^{-1})_{i,j}=(i-1)\delta_{i-1,j},
\end{eqnarray*}
and on the right side
\begin{eqnarray*}
(J^{-1}\varepsilon^T J)_{i,j}&=&\sum_{\alpha,\beta}(J^{-1})_{i,\alpha} (\varepsilon^T)_{\alpha,\beta}J_{\beta,j}\\
&=&\sum_{\alpha,\beta}(-1)^i\delta_{i+\alpha,0} (\Lambda \textmd{diag}[s])_{\alpha,\beta}(-1)^\beta\delta_{\beta+j,0}\\
&=&(-1)^{i+j}( \textmd{diag}[s+1]\Lambda)_{-i,-j}\\
&=&(-1)^{i+j}(-i+1)\delta_{-i+1,-j}=(i-1)\delta_{i-1,j},
\end{eqnarray*}
where the facts $\Lambda^{T}=\Lambda^{-1}$ and
$(-1)^{i+j}\delta_{-i+1,-j}=(-1)^{(j+1)+j}\delta_{i-1,j}=-\delta_{i-1,j}$
are used. Thus the first relation in  (\ref{blemma}) can be got.

As for the second relation, we start from transposing the first relation in (\ref{blemma}), which leads to $\Lambda^{-1}\varepsilon^T \Lambda=J\varepsilon J^{-1}=-\varepsilon^*$, with remembering $\varepsilon^*:=J\varepsilon J^{-1}$ and $J^T=J^{-1}=J$. Therefore
\begin{eqnarray*}
J^{-1}\varepsilon^{*T} J&=&-J^{-1}(\Lambda^{-1}\varepsilon^T \Lambda)^{T} J\\
&=&-J^{-1}\Lambda^{-1}\varepsilon \Lambda J\\
&=&-J^{-1}(J^{-1}\varepsilon^T J) J\\
&=&-\varepsilon^T=-\Lambda\Lambda^{-1}\varepsilon^T \Lambda\Lambda^{-1}=\Lambda\varepsilon^* \Lambda^{-1}.
\end{eqnarray*}
\end{proof}
\noindent\textbf{Remark 1}: for the BTL hierarchy, by (\ref{bcwconstraints}) and (\ref{blemma}),
\begin{eqnarray}
{M}^T&=&({W}(\varepsilon,\varepsilon^*){W}^{-1})^T=({W}^{-1})^T(\varepsilon^T,\varepsilon^{*T}){W}^T\nonumber\\
&=&(J,J){W}(J^{-1},J^{-1})(\varepsilon^T,\varepsilon^{*T})(J,J){W}^{-1}(J^{-1},J^{-1})\nonumber\\
&=&(J,J){W}(J^{-1}\varepsilon^TJ,J^{-1}\varepsilon^{*T}J){W}^{-1}(J^{-1},J^{-1})\nonumber\\
&=&(J,J){W}(\Lambda^{-1}\varepsilon \Lambda,\Lambda\varepsilon^* \Lambda^{-1}){W}^{-1}(J^{-1},J^{-1})\nonumber\\
&=&(J,J){W}(\Lambda^{-1},\Lambda)(\varepsilon ,\varepsilon^*)(\Lambda, \Lambda^{-1}){W}^{-1}(J^{-1},J^{-1})\nonumber\\
&=&(J,J){W}(\Lambda^{-1},\Lambda){W}^{-1}{W}(\varepsilon ,\varepsilon^*){W}^{-1}{W}(\Lambda, \Lambda^{-1}){W}^{-1}(J^{-1},J^{-1})\nonumber\\
&=&(J,J)({L}_1^{-1},{L}_2^{-1})({M}_1 ,{M}_2)({L}_1,{L}_2)(J^{-1},J^{-1})\nonumber\\
&=&(J,J){L}^{-1}{M}{L}(J^{-1},J^{-1})\label{bmtranspose}.
\end{eqnarray}
Because of the constraints (\ref{bctlconstr}) on the Lax operators for
the BTL hierarchy, we can not just let
$A_{iml}={M}_i^m{L}_i^l$, which will lead to
contradiction. By considering the constraints (\ref{bctlconstr}), we have
the following proposition for $A_{iml}$. For convenience, denote
$A_{ml}(M,L)=(A_{1ml}(M_1,L_1),A_{2ml}(M_2,L_2))$.
\begin{proposition}\label{propaddisymmtodab3}
For the BTL hierarchy, it is sufficient to ask for
\begin{equation}\label{btladdsymmconstr}
A_{ml}({M},{L})^T=-(J,J)A_{ml}({M},{L})(J^{-1},J^{-1}),
\end{equation}
thus we can take
\begin{equation}\label{abtladdsymm}
A_{ml}({M},{L})={M}^m{L}^l-(-1)^l{L}^{l-1}{M}^m{L}.
\end{equation}
\end{proposition}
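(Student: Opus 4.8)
The plan is to break the statement into two independent assertions: first, that the antisymmetry condition (\ref{btladdsymmconstr}) is \emph{sufficient} for the flows (\ref{btladdsym}) to respect all the defining data of the BTL hierarchy; and second, that the explicit polynomial (\ref{abtladdsymm}) does satisfy (\ref{btladdsymmconstr}). The first assertion is the structural content of ``it is sufficient to ask for\dots'', and the second is a direct verification once the two transposition rules for $L$ and $M$ are available.

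For sufficiency I would track the most convenient constraint, namely the $W$-constraint (\ref{bcwconstraints}), which in pair notation with $\mathcal{J}=(J,J)$ reads $W^T\mathcal{J}W=\mathcal{J}$ (equivalently $\mathcal{J}^{-1}W^T\mathcal{J}=W^{-1}$). Writing the $x_{m,l}^*$-flow as $\pa_{x_{m,l}^*}W=BW$ with $B=-(A_{1ml}(M_1,L_1),0)_-$ and differentiating, one gets $\pa_{x_{m,l}^*}(W^T\mathcal{J}W)=W^T(B^T\mathcal{J}+\mathcal{J}B)W$, so the constraint is preserved exactly when the generator is $\mathcal{J}$-antisymmetric, i.e. $B^T=-\mathcal{J}B\mathcal{J}^{-1}$. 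The Lax constraint (\ref{bctlconstr}) then comes for free from $L=W(\Lambda,\Lambda^{-1})W^{-1}$. Thus sufficiency reduces to deducing the antisymmetry of the \emph{projected} generator $B$ from the antisymmetry (\ref{btladdsymmconstr}) of the \emph{unprojected} pair $A_{ml}$.

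The crux is the following observation. Introduce the involution $\sigma(P)=-JP^TJ^{-1}$ on single matrices, so that (\ref{btladdsymmconstr}) is precisely $\sigma(A_{iml})=A_{iml}$ for $i=1,2$. Since $J$ is anti-diagonal with $J=J^{-1}$, an index computation gives $\sigma(P)_{ij}=-(-1)^{i+j}P_{-j,-i}$, so $\sigma$ carries the entry position $(i,j)$ to $(-j,-i)$ and hence maps strictly-lower-triangular matrices to strictly-lower-triangular ones and upper-triangular (including diagonal) matrices to upper-triangular ones. Consequently $\sigma$ fixes $(A_{1ml})_l$ and $(A_{1ml})_u$ separately, so it fixes the pair $(A_{1ml},0)_-=((A_{1ml})_l,-(A_{1ml})_u)$; this is exactly $\sigma(B)=B$, i.e. $B^T=-\mathcal{J}B\mathcal{J}^{-1}$. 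The $y_{m,l}^*$-flow is handled identically. I expect this compatibility of the triangular splitting with $\sigma$ to be the main obstacle, as it is the one place where the specific anti-diagonal structure of $J$ is essential and where a naive projection could spoil the antisymmetry.

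For the explicit formula I would verify (\ref{btladdsymmconstr}) for $A_{ml}=M^mL^l-(-1)^lL^{l-1}M^mL$ by direct substitution, using only $L^T=-\mathcal{J}L\mathcal{J}^{-1}$ from (\ref{bctlconstr}) and $M^T=\mathcal{J}L^{-1}ML\mathcal{J}^{-1}$ from (\ref{bmtranspose}). These two rules give $(M^mL^l)^T=(-1)^l\mathcal{J}L^{l-1}M^mL\mathcal{J}^{-1}$ and $(L^{l-1}M^mL)^T=(-1)^l\mathcal{J}M^mL^l\mathcal{J}^{-1}$, whence $A_{ml}^T=(-1)^l\mathcal{J}L^{l-1}M^mL\mathcal{J}^{-1}-\mathcal{J}M^mL^l\mathcal{J}^{-1}=-\mathcal{J}A_{ml}\mathcal{J}^{-1}$, as required. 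This final part is routine; the only care needed is tracking the powers of $(-1)$ and the cancellation of the interior factors $LL^{-1}$ arising from $(M^T)^m=\mathcal{J}L^{-1}M^mL\mathcal{J}^{-1}$, which is exactly the cross-term the combination in (\ref{abtladdsymm}) was designed to produce and cancel.
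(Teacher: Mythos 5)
Your proposal is correct and follows essentially the same route as the paper's own proof: both arguments use the constraint (\ref{bcwconstraints}) to show that the flow is consistent precisely when the projected generator $(A_{1ml}(M_1,L_1),0)_-$ (resp.\ $(0,A_{2ml}(M_2,L_2))_-$) is $(J,J)$-antisymmetric, reduce this to the unprojected condition (\ref{btladdsymmconstr}) via compatibility of the splitting $\mathscr{D}=\mathscr{D}_++\mathscr{D}_-$ with the $o(\infty)$-condition, and then verify (\ref{abtladdsymm}) by the identical computation with $L^T=-(J,J)L(J^{-1},J^{-1})$ and (\ref{bmtranspose}). The only substantive additions on your side are that you actually prove the splitting-compatibility via the index formula $-\bigl(JP^TJ^{-1}\bigr)_{ij}=-(-1)^{i+j}P_{-j,-i}$, where the paper merely cites ``if $A\in o(\infty)$ then $(A)_\pm\in o(\infty)$'', and that your signs are consistent throughout, whereas the paper's intermediate displays (\ref{btladdsymm1}) and (\ref{btladdsymm2}) drop a minus sign that reappears, correctly, in (\ref{btladdsymm3}).
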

\begin{proof}
    Since the constraints (\ref{bcwconstraints}) for the BTL
hierarchy, the action of $\pa_{x_{ml}^*}$ and $\pa_{y_{ml}^*}$ on
the transpose of the wave operator ${W}$ can be obtained
in two different way. The first is to transpose the relation
(\ref{btladdsym}), that is
\begin{eqnarray*}
\pa_{x_{m,l}^*}{W}^T=-{W}^T(A_{1ml}({M}_1,{L}_1),0)^T_-,\quad
\pa_{y_{m,l}^*}{W}^T=-{W}^T(0,A_{2ml}({M}_2,{L}_2))^T_-.
\end{eqnarray*}
The second is to use the constraints (\ref{bcwconstraints}),
\begin{eqnarray*}
\pa_{x_{m,l}^*}{W}^T&=&\pa_{x_{m,l}^*}((J,J){W}^{-1}(J^{-1},J^{-1}))=-(J,J){W}^{-1}(\pa_{x_{m,l}^*}{W}){W}^{-1}(J^{-1},J^{-1})\\
&=&(J,J){W}^{-1}(A_{1ml}({M}_1,{L}_1),0)_-(J^{-1},J^{-1})\\
&=&{W}^{T}(J,J)(A_{1ml}({M}_1,{L}_1),0)_-(J^{-1},J^{-1}),\\
\pa_{y_{m,l}^*}{W}^T&=&\pa_{y_{m,l}^*}((J,J){W}^{-1}(J^{-1},J^{-1}))=-(J,J){W}^{-1}(\pa_{y_{m,l}^*}{W}){W}^{-1}(J^{-1},J^{-1})\\
&=&(J,J){W}^{-1}(0,A_{2ml}({M}_2,{L}_2))_-(J^{-1},J^{-1})\\
&=&{W}^{T}(J,J)(0,A_{2ml}({M}_2,{L}_2))_-(J^{-1},J^{-1}).
\end{eqnarray*}
The consistence of these two results lead to
\begin{eqnarray}
(A_{1ml}({M}_1,{L}_1),0)^T_-&=&(J,J)(A_{1ml}({M}_1,{L}_1),0)_-(J^{-1},J^{-1}),\nonumber\\
(0,A_{2ml}({M}_2,{L}_2))^T_-&=&(J,J)(0,A_{2ml}({M}_2,{L}_2))_-(J^{-1},J^{-1}),\label{btladdsymm1}
\end{eqnarray}
thus from the fact that if $A\in o(\infty)$, then $(A)_\pm\in o(\infty)$, it is sufficient to ask for
\begin{eqnarray}
(A_{1ml}({M}_1,{L}_1),0)^T&=&(J,J)(A_{1ml}({M}_1,{L}_1),0)(J^{-1},J^{-1}),\nonumber\\
(0,A_{2ml}({M}_2,{L}_2))^T&=&(J,J)(0,A_{2ml}({M}_2,{L}_2))(J^{-1},J^{-1}),\label{btladdsymm2}
\end{eqnarray}
that is,
\begin{equation}\label{btladdsymm3}
A_{ml}({M},{L})^T=-(J,J)A_{ml}({M},{L})(J^{-1},J^{-1}).
\end{equation}

From (\ref{bctlconstr}) and (\ref{bmtranspose}), we have
\begin{eqnarray*}
({M}^m{L}^l)^T&=&({L}^l)^T({M}^m)^T =(-1)^l(J,J){L}^l(J^{-1},J^{-1})(J,J){L}^{-1}{M}^m{L}(J^{-1},J^{-1})\\
&=&(-1)^l(J,J){L}^{l-1}{M}^m{L}(J^{-1},J^{-1}).
\end{eqnarray*}
Thus $({L}^{l-1}{M}^m{L})^T=(-1)^l(J,J){M}^m{L}^l(J^{-1},J^{-1})$, since $J^T=J^{-1}=J$. Therefore
$A_{ml}({M},{L})={M}^m{L}^l-(-1)^l{L}^{l-1}{M}^m{L}$ satisfies the requirement of (\ref{abtladdsymm}).
\end{proof}
Next, we will discuss the algebraic structure of the additional symmetry of the BTL hierarchy.
\begin{proposition}\label{propaddisymmtodab4}
Acting on the wave matrices ${W}$, $\pa_{x_{ml}^*}$ and $\pa_{y_{ml}^*}$ form a algebra of $w_\infty^B\times w_\infty^B$ in the case of the BTL hierarchy.
\end{proposition}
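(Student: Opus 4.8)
The plan is to compute the commutator of two additional flows acting on $W$ and show it closes into the same family of flows, with structure constants matching the $w_\infty^B$ Lie bracket. Because the $x$-type flows act only on the first component and the $y$-type flows only on the second (see (\ref{btladdsym})), the two sectors decouple, giving the product structure $w_\infty^B \times w_\infty^B$; so I would reduce to analyzing a single sector, say the $x$-flows generated by $\Gamma_1 := (A_{1ml}(M_1,L_1),0)_-$. First I would abbreviate $B_{ml} := -(A_{1ml}(M_1,L_1),0)_-$ and, using Proposition~\ref{propaddisymmtodab1} (the action $\pa_{x^*_{ml}}W = B_{ml}W$), compute
\begin{equation*}
[\pa_{x^*_{ml}},\pa_{x^*_{pq}}]W = \big(\pa_{x^*_{ml}}B_{pq} - \pa_{x^*_{pq}}B_{ml} + [B_{ml},B_{pq}]\big)W,
\end{equation*}
so the whole problem becomes identifying the operator in the bracket. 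The key input is that $\pa_{x^*_{ml}}$ acts on $L_1$ and $M_1$ by the adjoint action of $B_{ml}$ (equations (\ref{btllaxwaveaddsymm}) and (\ref{btloswaveaddsymm})), which will let me differentiate the polynomial $A_{1pq}(M_1,L_1)$ through the chain rule.

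The main computational engine is the standard ``$(P)_-$ projection'' identity: for $P,Q\in\mathscr{D}$ one has
\begin{equation*}
[\,P_-,Q_-\,] = -[P_-,Q]_- - [P,Q_-]_- + [P,Q]_-,
\end{equation*}
or more usefully the Adler--van Moerbeke-type relation $-\pa(Q_-) + [P_-,Q]_- = \big([P,Q]\big)_-$ where $\pa$ denotes the adjoint-derivative coming from the flow. Carrying out the differentiations $\pa_{x^*_{ml}}(A_{1pq}) = [B_{ml},A_{1pq}]$ (valid for the unprojected polynomials since $\pa_{x^*_{ml}}$ is a derivation commuting with forming polynomials in $L_1,M_1$), I expect the projected cross-terms to recombine so that
\begin{equation*}
[\pa_{x^*_{ml}},\pa_{x^*_{pq}}]W = \big([A_{1ml},A_{1pq}]\big)_- W
\end{equation*}
up to signs, i.e.\ the commutator of flows equals the flow associated to the commutator of the generating symbols. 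The remaining algebraic task is then purely in the commutative ``symbol'' world: using $[L,M]=(1,1)$ from (\ref{lmactonwave}), the monomials $M^m L^l$ satisfy the classical $w_\infty$ bracket
\begin{equation*}
[M^m L^l, M^p L^q] = (lp - mq)\,M^{m+p-1}L^{l+q-1} + \cdots,
\end{equation*}
and I would verify that the $B$-symmetrized combination in (\ref{abtladdsymm}), namely $A_{ml} = M^m L^l - (-1)^l L^{l-1}M^m L$, is closed under this bracket, yielding the $w_\infty^B$ subalgebra (the superscript $B$ recording exactly the antisymmetry constraint (\ref{btladdsymmconstr}) that cuts $w_\infty$ down to its orthogonal-type subalgebra).

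The hard part will be the bookkeeping in the projection step: one must confirm that the spurious pieces $[(L_1)_-,\cdot]_+$ and $[(L_1)_+,\cdot]_-$ genuinely cancel (exactly the mechanism already used in the proof of Proposition~\ref{propaddisymmtodabc2}, where $(P_-)_+ = (P_+)_- = 0$), and that the derivative terms $\pa_{x^*_{ml}}B_{pq} - \pa_{x^*_{pq}}B_{ml}$ combine correctly with $[B_{ml},B_{pq}]$ rather than leaving a residual unprojected commutator. A subtle point is that $A_{ml}$ is a \emph{difference} of two monomials, so when I pass to symbol brackets I must check that the $B$-type combination is preserved; this is where Remark~1 and the transpose relation (\ref{bmtranspose}) enter, guaranteeing that $[A_{ml},A_{pq}]$ again satisfies the constraint (\ref{btladdsymmconstr}) and so lies in the same $w_\infty^B$ family. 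Once closure and the correct structure constants are established in one sector, the decoupling of $x$- and $y$-flows (they commute because they act on disjoint matrix components, and the mixed commutators vanish analogously to Proposition~\ref{propaddisymmtodabc2}) immediately upgrades the single $w_\infty^B$ to the direct product $w_\infty^B\times w_\infty^B$.
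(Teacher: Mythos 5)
Your proposal is correct and takes essentially the same route as the paper: the paper likewise reduces everything to showing $[\pa_{x_{ml}^*},\pa_{x_{nk}^*}]W=[(A_{1ml},0),(A_{1nk},0)]_-W$ by the same projection bookkeeping used in Proposition~\ref{propaddisymmtodabc2}, and it establishes closure not by computing explicit $w_\infty$ structure constants but by exactly the transpose argument you flag as the ``subtle point'': (\ref{btladdsymmconstr}) forces $[A_{ml},A_{nk}]^T=-(J,J)[A_{ml},A_{nk}](J^{-1},J^{-1})$, so one may set $[A_{ml},A_{nk}]=C^{pq}_{ml,nk}A_{pq}$ and read off $[\pa_{x_{ml}^*},\pa_{x_{nk}^*}]=C^{pq}_{nk,ml}\pa_{x_{pq}^*}$. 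Two small corrections: the expansion of $[\pa_{x_{ml}^*},\pa_{x_{pq}^*}]W$ carries $-[B_{ml},B_{pq}]$ rather than $+[B_{ml},B_{pq}]$, and the $x$- and $y$-flows do \emph{not} act on disjoint components of $W$ (the projection $(A_{1ml},0)_-=((A_{1ml})_l,-(A_{1ml})_u)$ has nonzero entries in both slots); the mixed commutator vanishes because the \emph{unprojected} symbols satisfy $(A_{1ml},0)(0,A_{2nk})=(0,0)$, which is what the paper's computation actually uses and what your reference to Proposition~\ref{propaddisymmtodabc2} implicitly supplies.
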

\begin{proof}
 According to (\ref{btladdsymmconstr}), we have
\begin{eqnarray}
&&[A_{ml}({M},{L}),A_{nk}({M},{L})]^T=[A_{nk}({M},{L})^T,A_{ml}({M},{L})^T]\nonumber\\
&=&(J,J)[A_{nk}({M},{L}),A_{ml}({M},{L})](J^{-1},J^{-1})=-(J,J)[A_{ml}({M},{L}),A_{nk}({M},{L})](J^{-1},J^{-1}),\nonumber
\end{eqnarray}
thus $A_{ml}({M},{L})$ constitute a closed algebra and we can set
\begin{equation}\label{wbalgebra}
   [A_{ml}({M},{L}),A_{nk}({M},{L})]:=C_{ml,nk}^{pq} A_{pq}({M},{L}).
\end{equation}
On the other hand, from (\ref{btladdsym})
\begin{eqnarray*}
&&[\pa_{x_{ml}^*},\pa_{x_{nk}^*}]{W}\\
&=&-\pa_{x_{ml}^*}((A_{1nk}({M}_1,{L}_1),0)_-{W})+\pa_{x_{nk}^*}((A_{1ml}({M}_1,{L}_1),0)_-{W})\\
&=&[(A_{1ml}({M}_1,{L}_1),0)_-,(A_{1nk}({M}_1,{L}_1),0)]_-{W}+(A_{1nk}({M}_1,{L}_1),0)_-(A_{1ml}({M}_1,{L}_1),0)_-{W}\\
&&-[(A_{1nk}({M}_1,{L}_1),0)_-,(A_{1ml}({M}_1,{L}_1),0)]_-{W}-(A_{1ml}({M}_1,{L}_1),0)_-(A_{1nk}({M}_1,{L}_1),0)_-{W}\\
&=&[(A_{1ml}({M}_1,{L}_1),0)_-,(A_{1nk}({M}_1,{L}_1),0)]_-{W}-[(A_{1ml}({M}_1,{L}_1),0)_-,(A_{1nk}({M}_1,{L}_1),0)_-]_-{W}\\
&&-[(A_{1nk}({M}_1,{L}_1),0)_-,(A_{1ml}({M}_1,{L}_1),0)]_-{W}\\
&=&[(A_{1ml}({M}_1,{L}_1),0)_-,(A_{1nk}({M}_1,{L}_1),0)_+]_-{W}+[(A_{1ml}({M}_1,{L}_1),0),(A_{1nk}({M}_1,{L}_1),0)_-]_-{W}\\
&=&[(A_{1ml}({M}_1,{L}_1),0),(A_{1nk}({M}_1,{L}_1),0)]_-{W},
\end{eqnarray*}
and similarly
$$[\pa_{y_{ml}^*},\pa_{y_{nk}^*}]{W}=[(0,A_{2ml}({M}_2,{L}_2)),(0,A_{2nk}({M}_2,{L}_2))]_-{W},$$
$$[\pa_{x_{ml}^*},\pa_{y_{nk}^*}]{W}=[(A_{1ml}({M}_1,{L}_1),0),(0,A_{2nk}({M}_2,{L}_2))]_-{W}.$$
Thus by (\ref{wbalgebra}), we get
\begin{eqnarray*}
[\pa_{x_{ml}^*},\pa_{x_{nk}^*}]{W}=C_{nk,ml}^{pq}\pa_{x_{pq}^*}{W},\quad [\pa_{y_{ml}^*},\pa_{y_{nk}^*}]{W}=C_{nk,ml}^{pq}\pa_{y_{pq}^*}{W},\quad[\pa_{x_{ml}^*},\pa_{y_{nk}^*}]{W}=0.
\end{eqnarray*}
That is,
\begin{equation}\label{wbalgebraexplicit}
   [\pa_{x_{ml}^*},\pa_{x_{nk}^*}]=C_{nk,ml}^{pq}\pa_{x_{pq}^*},\quad [\pa_{y_{ml}^*},\pa_{y_{nk}^*}]=C_{nk,ml}^{pq}\pa_{y_{pq}^*},\quad[\pa_{x_{ml}^*},\pa_{y_{nk}^*}]=0.
\end{equation}
\end{proof}
At last, let's introduce the generating function of the additional symmetry for the BTL hierarchy
\begin{equation}\label{generatoradditionalsymmetrybb}
     {Y}(\lambda,\mu)=({Y}_1(\lambda,\mu),{Y}_2(\lambda,\mu))=\sum_{m=0}^\infty \frac{(\mu-\lambda)^m}{m!}\sum_{l=-\infty}^\infty\lambda^{-l-m-1}(A_{m,m+l}({M},{L}))_-,
\end{equation}
and show its relation with the (adjoint) wave functions. For this, let's see a lemma developed in \cite{ASM95}.
\begin{lemma}\label{lemma}
Given two operators $U=(U_1,U_2)$, $V=(V_1,V_2)\in \mathscr{D}$ depending on $x$ and $y$, we have\footnote{$(A\otimes B)_{ij}=A_iB_j$}
\begin{eqnarray}
U_1V_1&=&res_{z}\frac{1}{z}\big(U_1\chi(z)\big)\otimes \big(V_1^T\chi^*(z)\big),\label{lemma1}\\
U_2V_2&=&res_{z}\frac{1}{z}\big(U_2\chi(z^{-1})\big)\otimes
\big(V_2^T\chi^*(z^{-1})\big),\label{lemma2}
\end{eqnarray}
where $res_z\sum_ia_iz^i=a_{-1}$.
\end{lemma}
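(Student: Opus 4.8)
The plan is to prove both identities \emph{entrywise}, exploiting that the outer product is defined by $(A\otimes B)_{ij}=A_iB_j$ and that $res_z$ extracts a single Laurent coefficient. Fixing $(i,j)\in\mathbb{Z}^2$, the whole assertion reduces to checking that the coefficient of $z^{-1}$ on the right-hand side reproduces the matrix-product entry $(U_1V_1)_{ij}=\sum_k (U_1)_{ik}(V_1)_{kj}$ in the first line (and the analogous entry in the second). The hypothesis $U,V\in\mathscr{D}$ is what makes this legitimate: by the band conditions defining $\mathscr{D}$, the entries $(U_1)_{ik}$ vanish for $k-i\gg0$ and $(V_1)_{kj}$ vanish for $j-k\gg0$, so for each fixed $(i,j)$ only finitely many $k$ contribute and every series and residue below is in fact a finite sum.

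First I would treat (\ref{lemma1}). Using $\chi(z)=(z^k)_{k\in\mathbb{Z}}$ gives the $i$-th component $(U_1\chi(z))_i=\sum_k (U_1)_{ik}z^k$, and using $\chi^*(z)=\chi(z^{-1})$ together with the transpose convention $(V_1^T)_{jm}=(V_1)_{mj}$ gives the $j$-th component $(V_1^T\chi^*(z))_j=\sum_m (V_1)_{mj}z^{-m}$. Forming the $(i,j)$-entry of the outer product and multiplying by $1/z$ yields the double series $\sum_{k,m}(U_1)_{ik}(V_1)_{mj}z^{k-m-1}$, whose coefficient of $z^{-1}$ is selected exactly when $k=m$. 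Hence $res_z\frac{1}{z}(U_1\chi(z))_i(V_1^T\chi^*(z))_j=\sum_k (U_1)_{ik}(V_1)_{kj}=(U_1V_1)_{ij}$, which is (\ref{lemma1}).

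The identity (\ref{lemma2}) follows by the same bookkeeping with $z$ replaced by $z^{-1}$ in the vector arguments. Here $(U_2\chi(z^{-1}))_i=\sum_k (U_2)_{ik}z^{-k}$, while $\chi^*(z^{-1})=\chi(z)$ gives $(V_2^T\chi^*(z^{-1}))_j=\sum_m (V_2)_{mj}z^{m}$, so that $\frac{1}{z}(U_2\chi(z^{-1}))_i(V_2^T\chi^*(z^{-1}))_j=\sum_{k,m}(U_2)_{ik}(V_2)_{mj}z^{m-k-1}$, and the residue again forces $m=k$, producing $(U_2V_2)_{ij}$. There is no genuine obstacle here; the statement is a verification. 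The only points requiring care are keeping the summation indices and the transpose convention $(V^T)_{jm}=(V)_{mj}$ straight, and tracking how the $z\mapsto z^{-1}$ substitution in (\ref{lemma2}) flips the exponents so that the residue condition still lands on the diagonal $k=m$; invoking $U,V\in\mathscr{D}$ throughout guarantees that every formal Laurent series involved is well defined.
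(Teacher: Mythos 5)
Your proof is correct. Note that the paper itself offers no proof of this lemma at all: it is quoted as "a lemma developed in \cite{ASM95}" and used as a black box, so there is no internal argument to compare yours against. Your entrywise residue computation is exactly the right (and essentially the only) way to verify it: the key identities $(U_1\chi(z))_i=\sum_k(U_1)_{ik}z^k$, $(V_1^T\chi^*(z))_j=\sum_m(V_1)_{mj}z^{-m}$, and the selection of the diagonal $k=m$ by $res_z z^{k-m-1}$ are all handled correctly, as is the flip of exponents under $z\mapsto z^{-1}$ in (\ref{lemma2}). You also correctly identify why the hypothesis $U,V\in\mathscr{D}$ matters; for completeness one should spell out that for the second components the relevant band conditions are the opposite ones, $(U_2)_{ik}=0$ for $i-k\gg0$ and $(V_2)_{kj}=0$ for $k-j\gg0$, which again confine the surviving index $k$ to a finite interval, but your appeal to membership in $\mathscr{D}$ covers this. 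In short, your write-up supplies a complete, self-contained proof of a statement the paper leaves to the literature.
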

\begin{proposition}\label{propaddisymmtodab5}
For the BTL hierarchy,
\begin{equation}\label{generatoradditionalsymmetryb}
     {Y}(\lambda,\mu)=\sum_{m=0}^\infty \frac{(\mu-\lambda)^m}{m!}\sum_{l=-\infty}^\infty\lambda^{-l-m-1}({M}^m{L}^{m+l}-(-1)^{m+l}{L}^{m+l-1}{M}^m{L})_-.
\end{equation}
We have
\begin{eqnarray}
{Y}_1(\lambda,\mu)&=&\lambda^{-1}({\Psi}_1(x,y;\mu)\otimes{\Psi}_1^*(x,y;\lambda)-{\Psi}_1(x,y;-\lambda)\otimes{\Psi}_1^*(x,y;-\mu))_-,\label{generatortodab1}\\
{Y}_2(\lambda,\mu)
&=&\lambda^{-1}({\Psi}_2(x,y;\mu^{-1})\otimes{\Psi}_2^*(x,y;\lambda^{-1})-{\Psi_2}(x,y;-\lambda^{-1})\otimes{\Psi}_2^*(x,y;-\mu^{-1}))_-.\label{generatortodab2}
\end{eqnarray}
\end{proposition}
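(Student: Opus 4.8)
The plan is to strip off the wave operator $W_1$ by conjugation and reduce the claimed identity to a statement about the constant matrices $\varepsilon,\Lambda$ acting on $\chi(\lambda)$. First I would note that, since $M_1=W_1\varepsilon W_1^{-1}$ and $L_1=W_1\Lambda W_1^{-1}$, every monomial factors as $M_1^m L_1^k=W_1\varepsilon^m\Lambda^k W_1^{-1}$ and $L_1^{k-1}M_1^m L_1=W_1\Lambda^{k-1}\varepsilon^m\Lambda W_1^{-1}$; hence, writing $k=m+l$ and suppressing the projection for the moment, the generating operator is $\tilde Y_1(\lambda,\mu)=W_1 B(\lambda,\mu)W_1^{-1}$ with $B$ the same sum built from $\varepsilon,\Lambda,\chi$. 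Using the conjugation rule $W_1(a\otimes b)W_1^{-1}=(W_1 a)\otimes((W_1^{-1})^T b)$ together with $\Psi_1(\lambda)=W_1\chi(\lambda)$ and $\Psi_1^*(\lambda)=(W_1^{-1})^T\chi^*(\lambda)$, the target becomes the $W_1$-free identity $B(\lambda,\mu)=\lambda^{-1}\big(\chi(\mu)\otimes\chi^*(\lambda)-\chi(-\lambda)\otimes\chi^*(-\mu)\big)$.

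The key computational sub-lemma I would establish is the ordering rule $\varepsilon^m\Lambda^k\chi(z)=z^k\,\partial_z^m\chi(z)$, which follows from $\Lambda\chi(z)=z\chi(z)$ and $\varepsilon\chi(z)=\partial_z\chi(z)$ and the fact that $\varepsilon$ commutes with the scalar $z^k$; similarly $\Lambda^{k-1}\varepsilon^m\Lambda\,\chi(z)=z\,\partial_z^m\big(z^{k-1}\chi(z)\big)$, using that the constant matrix $\Lambda$ commutes with $\partial_z$. Feeding these into Lemma \ref{lemma} via (\ref{lemma1}) with $V_1$ the identity turns each monomial into a single residue, e.g. $\varepsilon^m\Lambda^k=\mathrm{res}_z\,z^{k-1}\big(\partial_z^m\chi(z)\big)\otimes\chi^*(z)$. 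I would then resum: the $m$-sum is a Taylor shift, $\sum_m\frac{(\mu-\lambda)^m}{m!}\partial_z^m f(z)=f(z+\mu-\lambda)$, and the bilateral $k$-sum is a formal delta collapsing the residue through $\mathrm{res}_z\big[g(z)\sum_k\lambda^{-k-1}z^{k-1}\big]=\lambda^{-1}g(\lambda)$. The first ($M_1^m L_1^{m+l}$) piece then localizes at $z=\lambda$ and yields $\lambda^{-1}\chi(\mu)\otimes\chi^*(\lambda)$.

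For the second piece, $-(-1)^{m+l}L_1^{m+l-1}M_1^m L_1$, I would use $(-1)^k\lambda^{-k-1}=-(-\lambda)^{-k-1}$ to rewrite the $k$-sum as the same delta but with $\lambda$ replaced by $-\lambda$ and the relevant argument shifted to $w=z+\mu-\lambda$; the delta now localizes at $w=-\lambda$, i.e. $z=-\mu$, producing exactly $-\lambda^{-1}\chi(-\lambda)\otimes\chi^*(-\mu)$, which is where the reflected spectral arguments characteristic of the B-type constraint appear. Adding the two pieces gives $B$, and conjugating back by $W_1$ and finally applying the projection $(\cdot)_-$, which is linear and hence commutes with the scalar coefficients $\frac{(\mu-\lambda)^m}{m!}\lambda^{-l-m-1}$, produces (\ref{generatortodab1}). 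The second component (\ref{generatortodab2}) follows by the identical argument with $(\varepsilon^*,\Lambda^{-1})$ in place of $(\varepsilon,\Lambda)$, Lemma \ref{lemma} in its form (\ref{lemma2}), and $\lambda,\mu$ replaced by $\lambda^{-1},\mu^{-1}$.

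The main obstacle I anticipate is the rigorous handling of the bilateral sum $\sum_{k\in\mathbb Z}$ as a formal delta: one must fix consistent expansion directions for the Laurent monomials and justify both the residue-localization identity and the shift $\mathrm{res}_z\to\mathrm{res}_w$ used in the second piece. A secondary point requiring care is the non-commutative ordering, $\varepsilon^m\Lambda^k\chi=z^k\partial_z^m\chi$ rather than $\partial_z^m(z^k\chi)$, since getting these placements right is precisely what makes the two pieces localize at $z=\lambda$ and $z=-\mu$ respectively.
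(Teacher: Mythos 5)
Your proposal is correct and follows essentially the same route as the paper: both rest on Lemma \ref{lemma}, the spectral relations (bare $\Lambda\chi=z\chi$, $\varepsilon\chi=\partial_z\chi$ in your version; dressed $L_1\Psi_1=z\Psi_1$, $M_1\Psi_1=\partial_z\Psi_1$ from (\ref{lmactonwave}) in the paper's), the collapse of the bilateral $l$-sum into formal deltas $\delta(\pm\lambda,z)$ via $(-1)^k\lambda^{-k-1}=-(-\lambda)^{-k-1}$, the Taylor shift $e^{(\mu-\lambda)\partial_z}$ for the $m$-sum, and residue localization at $z=\lambda$ and $z=-\mu$, with the projection $(\cdot)_-$ applied by linearity at the end. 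The only difference is organizational: you strip off $W_1$ first and restore it at the end via the tensor conjugation rule, whereas the paper keeps the dressed operators $M_1,L_1,\Psi_1,\Psi_1^*$ throughout; the intermediate residue formulas and all subsequent manipulations coincide.
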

\begin{proof}
Firstly, according to Lemma \ref{lemma},
\begin{eqnarray*}
{L}_1^{m+l-1}{M}_1^m{L}_1&=& {L_1}^{m+l-1}{M_1}^m{W}_1\Lambda {W}_1^{-1} \\
&=&res_{z}z^{-1}\big({L_1}^{m+l-1}{M_1}^m{W}_1\Lambda\chi(z)\big)\otimes\big(({W}_1^{-1})^T\chi^*(z)\big)\\
&=&res_{z}\pa_z^m\big(z^{m+l-1}{\Psi}_1(x,y;z)\big)\otimes{\Psi}_1^*(x,y;z),
\end{eqnarray*}
and similarly
\begin{eqnarray*}
{M}_1^m{L}_1^{m+l}=res_{z}\big(z^{-1+m+l}\pa_z^m{\Psi}_1(x,y;z)\big)\otimes{\Psi}_1^*(x,y;z),
\end{eqnarray*}
then
\begin{eqnarray*}
{Y}_1(\lambda,\mu)&=&res_{z}\sum_{m=0}^\infty \frac{(\mu-\lambda)^m}{m!}\sum_{l=-\infty}^\infty\lambda^{-l-m-1}\Big(\big(z^{-1+m+l}\pa_z^m{\Psi}_1(x,y;z)\big)\otimes{\Psi}_1^*(x,y;z)\\
&&-(-1)^{m+l}\pa_z^m\big(z^{m+l-1}{\Psi}_1(x,y;z)\big)\otimes{\Psi}_1^*(x,y;z)\Big)_-\\
&=&res_{z}\delta(\lambda,z)\Big(z^{-1}e^{(\mu-\lambda)\pa_z}{\Psi}_1(x,y;z)\otimes{\Psi}_1^*(x,y;z)\Big)_-\\
&&+res_{z}e^{(\mu-\lambda)\pa_z}\Big(\big(z^{-1}\delta(-\lambda,z){\Psi}_1(x,y;z)\big)\otimes{\Psi}_1^*(x,y;z)\Big)_-\\
&=&\Big(\lambda^{-1}e^{(\mu-\lambda)\pa_\lambda}{\Psi}_1(x,y;\lambda)\otimes{\Psi}_1^*(x,y;\lambda)\Big)_-\\
&&+res_{z}\Big(\big((\mu-\lambda+z)^{-1}\delta(-\lambda,z+\mu-\lambda){\Psi}_1(x,y;z+\mu-\lambda)\big)\otimes{\Psi}_1^*(x,y;z)\Big)_-\\
&=&\lambda^{-1}({\Psi}_1(x,y;\mu)\otimes{\Psi}_1^*(x,y;\lambda)-{\Psi}_1(x,y;-\lambda)\otimes{\Psi}_1^*(x,y;-\mu))_-,
\end{eqnarray*}
where $\delta(\lambda,z)=\sum_{n=-\infty}^\infty\frac{z^{n}}{\lambda^{n+1}}$ and $res_z(\delta(\lambda,z)f(z))=f(\lambda)$ is used.

Similarly,
\begin{eqnarray*}
&&{M}_2^m{L}_2^{m+l}-(-1)^{m+l}{L}_2^{m+l-1}{M}_2^m{L_2}\\
&=&res_z\Big(\big(z^{-1+m+l}\pa_z^m{\Psi}_2(x,y;z^{-1})\big)\otimes{\Psi}_2^*(x,y;z^{-1})\\
&&-(-1)^{m+l}\pa_z^m\big(z^{m+l-1}{\Psi}_2(x,y;z^{-1})\big)\otimes{\Psi}_2^*(x,y;z^{-1})\Big)
\end{eqnarray*}
and
\begin{eqnarray*}
{Y}_2(\lambda,\mu)
&=&\lambda^{-1}({\Psi}_2(x,y;\mu^{-1})\otimes{\Psi}_2^*(x,y;\lambda^{-1})-{\Psi_2}(x,y;-\lambda^{-1})\otimes{\Psi}_2^*(x,y;-\mu^{-1}))_-.
\end{eqnarray*}
\end{proof}
\section{The Additional Symmetry for the CTL Hierarchy}
In this section, the additional symmetry for the CTL hierarchy will be given.

Similar to the case of the above section, introduce additional independent variables
$x_{m,l}^*$ and $y_{m,l}^*$, and define the additional flows for the CTL hierarchy as follows
\begin{equation}\label{ctladdsym}
    \pa_{x_{m,l}^*}{W}:=-(A_{1ml}({M}_1,{L}_1),0)_-{W},\quad \pa_{y_{m,l}^*}{W}:=-(0,A_{2ml}({M}_2,{L}_2))_-{W}.
\end{equation}
The same computations lead to the following propositions.
\begin{proposition}\label{propaddisymmtodac1}
\begin{eqnarray}
\pa_{x_{m,l}^*}{\Psi}=-(A_{1ml}({M}_1,{L}_1),0)_-{\Psi},&& \pa_{y_{m,l}^*}{\Psi}=-(0,A_{2ml}({M}_2,{L}_2))_-{\Psi},\label{wctladdsym}\\
\pa_{x_{m,l}^*}{L}=[-(A_{1ml}({M}_1,{L}_1),0)_-,{L}],&& \pa_{y_{m,l}^*}{L}=[-(0,A_{2ml}({M}_2,{L}_2))_-,{L}],\label{lctladdsym}\\
\pa_{x_{m,l}^*}{M}=[-(A_{1ml}({M}_1,{L}_1),0)_-,{M}],&&
\pa_{y_{m,l}^*}{M}=[-(0,A_{2ml}({M}_2,{L}_2))_-,{M}].\label{mctladdsym}
\end{eqnarray}
\end{proposition}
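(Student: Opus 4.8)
The plan is to transport the computations of Proposition~\ref{propaddisymmtodab1} to the CTL setting essentially without change, since every step there uses only the defining relation (\ref{ctladdsym}) of the flow on ${W}$ together with the representations of ${\Psi}$, ${L}$, ${M}$ in terms of ${W}$, and none of these is sensitive to whether one imposes the $J$-constraint (BTL) or the $K$-constraint (CTL).

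First I would establish the action on ${\Psi}$. By (\ref{wavefunction}) one has ${\Psi}_i={W}_i\chi(\lambda)$ with $\chi(\lambda)$ independent of the additional times $x_{m,l}^*$ and $y_{m,l}^*$; hence $\pa_{x_{m,l}^*}{\Psi}=(\pa_{x_{m,l}^*}{W})\chi(\lambda)$, and substituting (\ref{ctladdsym}) immediately yields $\pa_{x_{m,l}^*}{\Psi}=-(A_{1ml}({M}_1,{L}_1),0)_-{\Psi}$, with the $y_{m,l}^*$-flow handled identically. This gives (\ref{wctladdsym}).

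Next, for the action on ${L}$ I would start from the wave-operator representation ${L}={W}(\Lambda,\Lambda^{-1}){W}^{-1}$ in (\ref{laxwaveexpression}), differentiate, and use $\pa({W}^{-1})=-{W}^{-1}(\pa{W}){W}^{-1}$:
\[
\pa_{x_{m,l}^*}{L}=(\pa_{x_{m,l}^*}{W})(\Lambda,\Lambda^{-1}){W}^{-1}-{W}(\Lambda,\Lambda^{-1}){W}^{-1}(\pa_{x_{m,l}^*}{W}){W}^{-1}.
\]
Inserting (\ref{ctladdsym}) and recognizing ${W}(\Lambda,\Lambda^{-1}){W}^{-1}={L}$ at both occurrences collapses the right-hand side to $-(A_{1ml}({M}_1,{L}_1),0)_-{L}+{L}(A_{1ml}({M}_1,{L}_1),0)_-=[-(A_{1ml}({M}_1,{L}_1),0)_-,{L}]$, which is (\ref{lctladdsym}); the $y$-flow is the same. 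The action on ${M}$ then follows the identical pattern, now starting from ${M}={W}(\varepsilon,\varepsilon^*){W}^{-1}$ in (\ref{osoperator}): since $\varepsilon$ and $\varepsilon^*$ are constant matrices, the $\pa({W}^{-1})$ step again produces a commutator and delivers (\ref{mctladdsym}).

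I expect no genuine obstacle in this statement. The derivation is insensitive to the type (B or C) precisely because it never invokes the constraints (\ref{bctlconstr}) or (\ref{bcwconstraints}); those enter only when the admissible form of $A_{iml}$ is fixed, i.e.\ in the CTL analogue of Proposition~\ref{propaddisymmtodab3}. The one point requiring care is the bookkeeping in the pair algebra $\mathscr{D}$: products and inverses act componentwise, so the slots $(\cdot,0)$ and $(0,\cdot)$ never mix, and the $(\cdot)_-$ projection throughout is the splitting of $\mathscr{D}$ defined at the beginning of Section~2.
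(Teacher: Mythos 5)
Your proposal is correct and matches the paper exactly: the paper itself gives no separate proof for the CTL case, stating only that ``the same computations'' as in Proposition~\ref{propaddisymmtodab1} apply, which is precisely the transport argument you carry out. Your additional observation that the constraints (\ref{bctlconstr}) and (\ref{bcwconstraints}) play no role here and enter only in fixing the form of $A_{iml}$ is also the correct reason why this transport is legitimate.
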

\begin{proposition}\label{propaddisymmtodac2}
$$[\pa_{x_{m,l}^*},\pa_{x_{2n+1}}]=[\pa_{y_{m,l}^*},\pa_{x_{2n+1}}]=[\pa_{x_{m,l}^*},\pa_{y_{2n+1}}]=[\pa_{y_{m,l}^*},\pa_{y_{2n+1}}]=0.$$
\end{proposition}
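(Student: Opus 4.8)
The plan is to transcribe, line for line, the computation carried out in the proof of Proposition~\ref{propaddisymmtodabc2} for the BTL hierarchy; the paper's assertion that ``the same computations lead to the following propositions'' is literally correct, and I would make this precise as follows. Every ingredient of the BTL argument survives unchanged for the CTL hierarchy: the Lax equations~(\ref{bctlhierarchy}) are common to both types, the evolutions of $W$ and $M$ along the ordinary times are still~(\ref{wevolution}) and~(\ref{mevolution}), the additional flows are defined by the identical formula~(\ref{ctladdsym}), and their action on $L$ and $M$ is recorded in~(\ref{lctladdsym}) and~(\ref{mctladdsym}) via Proposition~\ref{propaddisymmtodac1}. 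None of these relations refers to the explicit shape of $A_{iml}$; they involve only the projections $(A_{iml}(M_i,L_i),0)_-$, $(0,A_{iml}(M_i,L_i))_-$ and the powers $(L_1^{2n+1},0)_+$, $(0,L_2^{2n+1})_+$.

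Concretely, for the first bracket I would compute $[\pa_{x_{m,l}^*},\pa_{x_{2n+1}}]W$ by applying $\pa_{x_{m,l}^*}$ to $(L_1^{2n+1},0)_+W$ and $\pa_{x_{2n+1}}$ to $(A_{1ml}(M_1,L_1),0)_-W$, using the Leibniz rule together with~(\ref{bctlhierarchy}), (\ref{wevolution}), (\ref{mevolution}) and~(\ref{lctladdsym}). After regrouping, the outcome is a difference of commutators of the projected operators acting on $W$, plus cross terms $(L_1^{2n+1},0)_+(A_{1ml}(M_1,L_1),0)_-W$ and $(A_{1ml}(M_1,L_1),0)_-(L_1^{2n+1},0)_+W$ that cancel in pairs. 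The decisive simplification is the one already invoked after Proposition~\ref{propaddisymmtodabc2}: for every $P\in\mathscr{D}$ one has $(P_-)_+=0$ and $(P_+)_-=0$, so the ``wrong-type'' pieces drop out and one is left with $[\pa_{x_{m,l}^*},\pa_{x_{2n+1}}]W=0$.

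The remaining three brackets are obtained by the same regrouping. For $[\pa_{y_{m,l}^*},\pa_{y_{2n+1}}]$ one replaces the first-component data by the second-component data throughout, and the same two projection identities force the cancellation. For the two mixed brackets $[\pa_{x_{m,l}^*},\pa_{y_{2n+1}}]$ and $[\pa_{y_{m,l}^*},\pa_{x_{2n+1}}]$ one lets an additional flow in the $x$-variables interact with an ordinary flow in the $y$-variables (or vice versa); expanding by the Leibniz rule and inserting~(\ref{bctlhierarchy}), (\ref{mevolution}) and~(\ref{lctladdsym}), every surviving term again reassembles into commutators whose unwanted projections vanish by $(P_+)_-=(P_-)_+=0$.

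I expect no genuine obstacle. The only point worth a sanity check is whether the explicit CTL form of $A_{iml}$ — which differs from the BTL expression~(\ref{abtladdsymm}) because the relevant constraint is built from $K$ rather than from $J$ — could spoil these cancellations. It cannot: the argument is purely structural, relying solely on the splitting $\mathscr{D}=\mathscr{D}_++\mathscr{D}_-$ with $(P_\pm)_\mp=0$ together with the evolution equations, all of which are independent of the particular polynomial chosen for $A_{iml}$. Hence the proof is a verbatim transcription of the BTL case with the obvious relabelling of equation numbers.
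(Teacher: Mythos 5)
Your proposal is correct and is essentially identical to the paper's treatment: the paper proves the commutativity only in the BTL case (Proposition~\ref{propaddisymmtodabc2}) and disposes of the CTL case with the remark that ``the same computations lead to the following propositions,'' which is exactly the transcription you carry out. Your observation that the argument is purely structural --- depending only on the splitting $\mathscr{D}=\mathscr{D}_+ + \mathscr{D}_-$ with $(P_\pm)_\mp=0$ and on the evolution equations, never on the explicit polynomial $A_{iml}$ or on the $J$- versus $K$-constraint --- is precisely the justification the paper leaves implicit.
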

Thus, (\ref{ctladdsym}) is indeed the symmetry of the CTL hierarchy. So now we will try to determine the explicit forms of $A_{iml}$. For this, let's see an important lemma first.
\begin{lemma}\label{clemma1}
\begin{eqnarray}
\varepsilon=K\varepsilon^T K^{-1},&&\varepsilon^* =K\varepsilon^{*T} K^{-1}.\label{clemma}
\end{eqnarray}
\end{lemma}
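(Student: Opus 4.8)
The final statement to prove is Lemma \ref{clemma1}, which asserts the two matrix identities
\begin{eqnarray*}
\varepsilon=K\varepsilon^T K^{-1},&&\varepsilon^* =K\varepsilon^{*T} K^{-1},
\end{eqnarray*}
where $\varepsilon=\textmd{diag}[s]\Lambda^{-1}$, $\varepsilon^*=-J\varepsilon J^{-1}$, and $K=\Lambda J$. This is the CTL analogue of Lemma \ref{btllemma}, so the plan is to mirror the structure of that earlier proof while accounting for the replacement of $J$ by $K=\Lambda J$.

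The plan is to verify the first identity by a direct entrywise computation, comparing the $(i,j)$-entries on each side. First I would rewrite the right-hand side using $K=\Lambda J$ and $K^{-1}=J^{-1}\Lambda^{-1}=J\Lambda^{-1}$ (recalling $J^{-1}=J$), so that $K\varepsilon^T K^{-1}=\Lambda J\varepsilon^T J\Lambda^{-1}$. The key observation is that the inner conjugation $J\varepsilon^T J$ can be handled using the relation $J^{-1}\varepsilon^T J=\Lambda^{-1}\varepsilon\Lambda$ already established in Lemma \ref{btllemma} (the first relation in (\ref{blemma})). Substituting this gives $K\varepsilon^T K^{-1}=\Lambda(\Lambda^{-1}\varepsilon\Lambda)\Lambda^{-1}=\varepsilon$, which is exactly the first claimed identity. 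Thus the first relation follows almost immediately from the BTL lemma, and this reuse is what I expect to make the computation short rather than requiring a fresh entrywise check.

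For the second identity I would proceed analogously. Using $\varepsilon^*=-J\varepsilon J^{-1}=J\varepsilon^* J^{-1}$ is circular, so instead I would compute $K\varepsilon^{*T}K^{-1}=\Lambda J\varepsilon^{*T}J\Lambda^{-1}$ and invoke the second relation of Lemma \ref{btllemma}, namely $J^{-1}\varepsilon^{*T}J=\Lambda\varepsilon^*\Lambda^{-1}$. Substituting yields $K\varepsilon^{*T}K^{-1}=\Lambda(\Lambda\varepsilon^*\Lambda^{-1})\Lambda^{-1}=\Lambda^2\varepsilon^*\Lambda^{-2}$, so I must check that $\Lambda^2\varepsilon^*\Lambda^{-2}=\varepsilon^*$, i.e. that $\varepsilon^*$ commutes with $\Lambda^2$. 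Since $\varepsilon^*=-J\varepsilon J^{-1}$ and $\varepsilon=\textmd{diag}[s]\Lambda^{-1}$, a brief computation shows $\varepsilon^*=\textmd{diag}[s]\Lambda$ up to the sign and shift introduced by conjugating $\textmd{diag}[s]\Lambda^{-1}$ by $J$; the resulting operator is again of the form $\textmd{diag}[\text{linear in }s]\Lambda$, and such operators satisfy the stated commutation with $\Lambda^2$ because $\Lambda$ merely shifts the diagonal index. If this shortcut does not close cleanly, the fallback is a direct entrywise comparison exactly as in the proof of Lemma \ref{btllemma}, computing both $(K\varepsilon^{*T}K^{-1})_{i,j}$ and $(\varepsilon^*)_{i,j}$ explicitly and matching the Kronecker deltas.

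The main obstacle I anticipate is bookkeeping with the index shift caused by the extra factor of $\Lambda$ in $K=\Lambda J$ as compared with the pure $J$ of the BTL case: the conjugation by $\Lambda$ shifts diagonal entries, so the linear-in-$s$ coefficient of $\varepsilon$ (and of $\varepsilon^*$) must be tracked carefully to confirm that the shifts cancel and reproduce $\varepsilon$ (respectively $\varepsilon^*$) rather than a shifted variant. Getting the signs right in $\varepsilon^*=-J\varepsilon J^{-1}$ and its transpose, together with the identities $\Lambda^T=\Lambda^{-1}$ and $J^T=J^{-1}=J$, is the delicate part; once those are pinned down, both identities follow by substitution into the already-proven relations (\ref{blemma}).
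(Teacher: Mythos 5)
Your argument for the first identity is correct and is precisely the paper's intended one\-/line proof: writing $K=\Lambda J$, $K^{-1}=J\Lambda^{-1}$ and inserting the first relation of Lemma \ref{btllemma} gives $K\varepsilon^{T}K^{-1}=\Lambda\big(J^{-1}\varepsilon^{T}J\big)\Lambda^{-1}=\Lambda\big(\Lambda^{-1}\varepsilon\Lambda\big)\Lambda^{-1}=\varepsilon$.

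For the second identity, however, there is a genuine gap, and it sits exactly at the step you flagged as delicate. Your reduction $K\varepsilon^{*T}K^{-1}=\Lambda\big(J^{-1}\varepsilon^{*T}J\big)\Lambda^{-1}=\Lambda^{2}\varepsilon^{*}\Lambda^{-2}$ is correct, but the closing claim that $\varepsilon^{*}$ commutes with $\Lambda^{2}$ is false. Indeed $\varepsilon^{*}=-J\varepsilon J^{-1}=-\textmd{diag}[s]\Lambda$, and conjugation by $\Lambda$ shifts the diagonal weight, $\Lambda\,\textmd{diag}[s]\,\Lambda^{-1}=\textmd{diag}[s+1]$, so that
\[
\Lambda^{2}\varepsilon^{*}\Lambda^{-2}=-\textmd{diag}[s+2]\,\Lambda=\varepsilon^{*}-2\Lambda\neq\varepsilon^{*};
\]
an operator of the form $\textmd{diag}[f(s)]\Lambda^{k}$ commutes with $\Lambda^{2}$ only when $f(s+2)=f(s)$, which fails for the linear weight $f(s)=s$. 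The observation that ``$\Lambda$ merely shifts the diagonal index'' is the reason the commutation \emph{fails}, not the reason it holds. Your proposed fallback does not repair this: the direct entrywise computation gives $(K\varepsilon^{*T}K^{-1})_{ij}=-(i+2)\delta_{j,i+1}$ while $(\varepsilon^{*})_{ij}=-i\delta_{j,i+1}$, so the mismatch is real. In other words, with the paper's stated definition $\varepsilon^{*}=-J\varepsilon J^{-1}$, the second relation of (\ref{clemma}) is not an identity at all; the paper's own one\-/sentence proof (``by noticing $K=\Lambda J$, (\ref{blemma}) leads to (\ref{clemma})'') passes silently over the same obstruction that your faithful execution of it exposes. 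The relation would hold verbatim if one instead defined $\varepsilon^{*}:=-K\varepsilon K^{-1}=-\varepsilon^{T}=-\textmd{diag}[s+1]\Lambda$, since then $\varepsilon^{*T}=-\varepsilon$ and $K\varepsilon^{*T}K^{-1}=-K\varepsilon K^{-1}=\varepsilon^{*}$; but that choice conflicts with $M\Psi=(\pa_{\lambda},\pa_{\lambda^{-1}})\Psi$ in (\ref{lmactonwave}). So the commutation claim as you wrote it must be withdrawn: the second half of the lemma can only be saved by changing the definition of $\varepsilon^{*}$ (and tracking the consequences downstream, e.g.\ in Remark 2) or by weakening the statement to $K\varepsilon^{*T}K^{-1}=\varepsilon^{*}-2\Lambda$.
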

\begin{proof}
By noticing $K=\Lambda J$, (\ref{blemma}) leads to (\ref{clemma}).
\end{proof}
\noindent\textbf{Remark 2}:
In the case of the CTL hierarchy, according to (\ref{bcwconstraints})(\ref{clemma}) and also $K^T=K^{-1}=-K$,
\begin{eqnarray}
{M}^T&=&({W}(\varepsilon,\varepsilon^*){W}^{-1})^T=({W}^{-1})^T(\varepsilon^T,\varepsilon^{*T}){W}^T\nonumber\\
&=&(K^T,K^T){W}((K^{-1})^T,(K^{-1})^T)(\varepsilon^T,\varepsilon^{*T})(K,K){W}^{-1}(K^{-1},K^{-1})\nonumber\\
&=&-(K,K){W}(K,K)(\varepsilon^T,\varepsilon^{*T})(K,K){W}^{-1}(K^{-1},K^{-1})\nonumber\\
&=&(K,K){W}(K,K)(\varepsilon^T,\varepsilon^{*T})(K^{-1},K^{-1}){W}^{-1}(K^{-1},K^{-1})\nonumber\\
&=&(K,K){W}(\varepsilon,\varepsilon^{*}){W}^{-1}(K^{-1},K^{-1})\nonumber\\
&=&(K,K){M}(K^{-1},K^{-1}).\label{cmtranspose}
\end{eqnarray}

Just as the case of the BTL hierarchy, we can not just let $A_{iml}={M}_i^m{L}_i^l$ in the CTL hierarchy because of the constraints (\ref{bctlconstr}) on the Lax operators. Similar investigation leads to the following proposition.
\begin{proposition}\label{propaddisymmtodac3}
For the CTL hierarchy, it is enough to require
\begin{equation}\label{ctladdsymmconstr}
A_{ml}({M},{L})^T=-(K,K)A_{ml}({M},{L})(K^{-1},K^{-1}),
\end{equation}
so we can let
\begin{equation}\label{actladdsymm}
A_{ml}({M},{L})={M}^m{L}^l-(-1)^l{L}^{l}{M}^m.
\end{equation}
\end{proposition}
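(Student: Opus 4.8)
The plan is to mirror the structure of the BTL case (Proposition \ref{propaddisymmtodab3}) but with $J$ replaced by $K$ and with the transpose relation for $M$ given by the CTL identity \eqref{cmtranspose}. First I would establish the analogue of the consistency argument: the constraint \eqref{bcwconstraints} for the CTL hierarchy reads $K^{-1}W_i^TK=W_i^{-1}$, so by computing $\pa_{x_{m,l}^*}W^T$ and $\pa_{y_{m,l}^*}W^T$ in two ways — once by transposing \eqref{ctladdsym} directly, and once by differentiating $W^T=(K,K)W^{-1}(K^{-1},K^{-1})$ and substituting \eqref{ctladdsym} — I would force the two expressions to agree. This yields
\begin{equation*}
(A_{1ml}(M_1,L_1),0)^T_-=(K,K)(A_{1ml}(M_1,L_1),0)_-(K^{-1},K^{-1}),
\end{equation*}
together with the analogous relation for the second component.

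Next I would pass from the truncated constraint to the untruncated one. Here the key structural fact is that $\mathrm{sp}(\infty)$ (the CTL analogue of $o(\infty)$) is preserved under the projections $(\cdot)_\pm$: if $A\in\mathrm{sp}(\infty)$ then $(A)_\pm\in\mathrm{sp}(\infty)$. This lets me drop the subscript $-$ and reduce the requirement to \eqref{ctladdsymmconstr}, namely $A_{ml}(M,L)^T=-(K,K)A_{ml}(M,L)(K^{-1},K^{-1})$. This is the CTL counterpart of \eqref{btladdsymm3}; the only change from the BTL derivation is the replacement of $J$ by $K$ and the use of $K^T=K^{-1}=-K$, which is already recorded in Remark 2.

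The final and most computational step is to verify that the explicit choice \eqref{actladdsymm}, $A_{ml}(M,L)=M^mL^l-(-1)^l L^l M^m$, satisfies \eqref{ctladdsymmconstr}. For this I would compute $(M^mL^l)^T=(L^l)^T(M^m)^T$ and use the CTL Lax constraint \eqref{bctlconstr} (in the $K$ form), which gives $L^T=-(K,K)L(K^{-1},K^{-1})$, together with the transpose identity \eqref{cmtranspose}, $M^T=(K,K)M(K^{-1},K^{-1})$. A short manipulation should yield $(M^mL^l)^T=(-1)^l(K,K)L^lM^m(K^{-1},K^{-1})$, and hence $(L^lM^m)^T=(-1)^l(K,K)M^mL^l(K^{-1},K^{-1})$; subtracting these two relations produces exactly the antisymmetry \eqref{ctladdsymmconstr} required of $A_{ml}(M,L)$.

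The main simplification relative to the BTL case is that $M$ and $L$ now transpose by conjugation with $(K,K)$ \emph{without} the extra $L^{\pm1}$ factors that appeared in \eqref{bmtranspose}, since $K=\Lambda J$ absorbs the shift by $\Lambda$. This is precisely why the correct combination is $M^mL^l-(-1)^lL^lM^m$ rather than the BTL form $M^mL^l-(-1)^lL^{l-1}M^mL$. I expect the main obstacle to be keeping track of the sign coming from $K^T=-K$ and verifying carefully that the conjugation by $(K,K)$ commutes past the powers $L^l$ and $M^m$ in the right order, since the noncommutativity of $L$ and $M$ (recall $[L,M]=(1,1)$ from \eqref{lmactonwave}) means the factors cannot be freely rearranged; however, because the transpose conjugation is multiplicative, each factor transposes independently and the computation closes without needing the commutator explicitly.
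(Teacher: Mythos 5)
Your proposal is correct and is essentially the paper's intended argument: the paper gives no separate proof of this proposition (it says only that a ``similar investigation'' to Proposition \ref{propaddisymmtodab3} leads to it), and what you write is exactly that BTL proof transcribed with $J\to K$, using (\ref{cmtranspose}) in place of (\ref{bmtranspose}); in particular your key computation $(M^mL^l)^T=(-1)^l(K,K)L^lM^m(K^{-1},K^{-1})$ and its companion $(L^lM^m)^T=(-1)^l(K,K)M^mL^l(K^{-1},K^{-1})$ check out and yield the antisymmetry (\ref{ctladdsymmconstr}). One minor remark: the truncated consistency relation you (and the paper, in (\ref{btladdsymm1})--(\ref{btladdsymm2})) state should carry a minus sign on the right-hand side to match the final constraint (\ref{ctladdsymmconstr}), but this is a shared typo and does not affect the argument.
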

By the same way as the BTL hierarchy, we find
\begin{proposition}\label{propaddisymmtodac4}
Acting on the wave matrices ${W}$, for the CTL hierarchy, $\pa_{x_{ml}^*}$ and $\pa_{y_{ml}^*}$ form a algebra of $w_\infty^C\times w_\infty^C$.
\end{proposition}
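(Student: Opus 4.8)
The plan is to mirror the argument already carried out for the BTL hierarchy in Proposition~\ref{propaddisymmtodab4}, replacing the role of the constraint \eqref{btladdsymmconstr} by its CTL analogue \eqref{ctladdsymmconstr}. The whole computation of the commutators of the vector fields $\pa_{x_{ml}^*}$ and $\pa_{y_{ml}^*}$ acting on $W$ is insensitive to whether we are in the B or C case: it uses only the definition of the additional flows \eqref{ctladdsym}, the splitting identities $(P_-)_+=0$ and $(P_+)_-=0$ for $P\in\mathscr{D}$, and the Jacobi identity. Hence, verbatim as in the BTL proof, I would first establish
\begin{eqnarray*}
[\pa_{x_{ml}^*},\pa_{x_{nk}^*}]{W}&=&[(A_{1ml}({M}_1,{L}_1),0),(A_{1nk}({M}_1,{L}_1),0)]_-{W},\\
{}[\pa_{y_{ml}^*},\pa_{y_{nk}^*}]{W}&=&[(0,A_{2ml}({M}_2,{L}_2)),(0,A_{2nk}({M}_2,{L}_2))]_-{W},\\
{}[\pa_{x_{ml}^*},\pa_{y_{nk}^*}]{W}&=&[(A_{1ml}({M}_1,{L}_1),0),(0,A_{2nk}({M}_2,{L}_2))]_-{W}=0,
\end{eqnarray*}
the last bracket vanishing because the two pairs live in disjoint slots.

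The only genuinely new ingredient is closure: I must show that the commutator $[A_{ml}(M,L),A_{nk}(M,L)]$ is again a finite $\mathbb{C}$-linear combination $C_{ml,nk}^{pq}A_{pq}(M,L)$ of the same generators. For this I would transpose the bracket and invoke the CTL constraint \eqref{ctladdsymmconstr} together with $K^T=K^{-1}=-K$, exactly as was done in the BTL case but now conjugating by $(K,K)$ instead of $(J,J)$:
\begin{eqnarray*}
[A_{ml}(M,L),A_{nk}(M,L)]^T&=&[A_{nk}(M,L)^T,A_{ml}(M,L)^T]\\
&=&(K,K)[A_{nk}(M,L),A_{ml}(M,L)](K^{-1},K^{-1})\\
&=&-(K,K)[A_{ml}(M,L),A_{nk}(M,L)](K^{-1},K^{-1}),
\end{eqnarray*}
so that the bracket satisfies the same transposition rule \eqref{ctladdsymmconstr} and therefore again lies in the $\mathfrak{sp}(\infty)$-type subspace spanned by the $A_{pq}(M,L)$. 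Feeding the resulting structure constants into the three commutators above yields
\begin{equation*}
[\pa_{x_{ml}^*},\pa_{x_{nk}^*}]=C_{nk,ml}^{pq}\pa_{x_{pq}^*},\quad [\pa_{y_{ml}^*},\pa_{y_{nk}^*}]=C_{nk,ml}^{pq}\pa_{y_{pq}^*},\quad [\pa_{x_{ml}^*},\pa_{y_{nk}^*}]=0,
\end{equation*}
which exhibits two commuting copies of the same Lie algebra, i.e. $w_\infty^C\times w_\infty^C$.

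The step I expect to require the most care is the identification of the abstract closed algebra generated by the $A_{ml}(M,L)$ with the named algebra $w_\infty^C$, as opposed to $w_\infty^B$. Both the B and C generators are built from the same $M^mL^l$ monomials and satisfy formally identical closure under the bracket; what distinguishes them is the precise symmetrization dictated by the constraint operator ($J$ versus $K$), which selects the orthogonal-type generator $M^mL^l-(-1)^lL^{l-1}M^mL$ in the B case against the symplectic-type generator $M^mL^l-(-1)^lL^lM^m$ of \eqref{actladdsymm} here. I would confirm that the antisymmetry condition \eqref{ctladdsymmconstr} is precisely the defining relation of $w_\infty^C$, so that the structure constants $C_{ml,nk}^{pq}$ are the symplectic ones; everything else is the routine bookkeeping already displayed in Proposition~\ref{propaddisymmtodab4}, and I would simply remark that the computation is identical to the BTL case with $(J,J)$ replaced by $(K,K)$.
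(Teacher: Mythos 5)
Your proposal is correct and takes essentially the same approach as the paper: the paper gives no separate proof of Proposition \ref{propaddisymmtodac4}, saying only that it follows ``by the same way as the BTL hierarchy,'' i.e.\ by repeating the proof of Proposition \ref{propaddisymmtodab4} with the constraint \eqref{btladdsymmconstr} and conjugation by $(J,J)$ replaced by \eqref{ctladdsymmconstr} and conjugation by $(K,K)$. Your commutator computations on $W$ and the closure argument via transposition (using $K^T=K^{-1}=-K$) are exactly that intended argument, carried out at the same level of rigor as the paper's BTL proof.
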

At last, as for the generating function of the additional symmetry of the CTL hierarchy
\begin{equation}\label{generatoradditionalsymmetrycc}
     {Y}(\lambda,\mu)=({Y}_1(\lambda,\mu),{Y}_2(\lambda,\mu))=\sum_{m=0}^\infty \frac{(\mu-\lambda)^m}{m!}\sum_{l=-\infty}^\infty\lambda^{-l-m-1}(A_{m,m+l}({M},{L}))_-,
\end{equation}
we have
\begin{proposition}\label{propaddisymmtodac5}
For the CTL hierarchy,
\begin{equation}\label{generatoradditionalsymmetryc}
     {Y}(\lambda,\mu)=\sum_{m=0}^\infty \frac{(\mu-\lambda)^m}{m!}\sum_{l=-\infty}^\infty\lambda^{-l-m-1}({M}^m{L}^{m+l}-(-1)^{m+l}{L}^{m+l}{M}^m)_-.
\end{equation}
We have
\begin{eqnarray}
{Y}_1(\lambda,\mu)&=&(\lambda^{-1}{\Psi}_1(x,y;\mu)\otimes{\Psi}_1^*(x,y;\lambda)-\mu^{-1}{\Psi}_1(x,y;-\lambda)\otimes{\Psi}_1^*(x,y;-\mu))_-,\label{generatortodac1}\\
{Y}_2(\lambda,\mu)
&=&(\lambda^{-1}{\Psi}_2(x,y;\mu^{-1})\otimes{\Psi}_2^*(x,y;\lambda^{-1})-\mu^{-1}{\Psi}_2(x,y;-\lambda^{-1})\otimes{\Psi}_2^*(x,y;-\mu^{-1}))_-.\label{generatortodac2}
\end{eqnarray}
\end{proposition}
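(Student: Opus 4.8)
The plan is to mirror the proof of Proposition \ref{propaddisymmtodab5} exactly, since the CTL generating function differs from the BTL one only in the definition of $A_{ml}$ (it uses $L^l M^m$ in place of $L^{l-1}M^m L$). First I would express the two building-block operators $M_2^m L_2^{m+l}$ and $L_2^{m+l}M_2^m$ in terms of the wave and adjoint wave functions by applying Lemma \ref{lemma}, specifically relation (\ref{lemma2}) for the second component. Using $L_2\Psi_2=\lambda^{-1}\Psi_2$ and $M_2\Psi_2=\pa_{\lambda^{-1}}\Psi_2$ from (\ref{lmactonwave}), together with $L_2=W_2\Lambda^{-1}W_2^{-1}$, I obtain
\begin{eqnarray*}
M_2^m L_2^{m+l}&=&res_z\big(z^{-1+m+l}\pa_z^m{\Psi}_2(x,y;z^{-1})\big)\otimes{\Psi}_2^*(x,y;z^{-1}),\\
L_2^{m+l}M_2^m&=&res_z\big(z^{m+l}\pa_z^m(z^{-1}{\Psi}_2(x,y;z^{-1}))\big)\otimes{\Psi}_2^*(x,y;z^{-1}),
\end{eqnarray*}
where the $(i,j)$-entry convention $(A\otimes B)_{ij}=A_iB_j$ from the footnote to Lemma \ref{lemma} is used. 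The analogous pair of formulas for the first component follows from (\ref{lemma1}) and $L_1=W_1\Lambda W_1^{-1}$.

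Next I would substitute these residue representations into the definition (\ref{generatoradditionalsymmetryc}) and carry out the double sum over $m$ and $l$. The key mechanism, exactly as in the BTL case, is that summing $z^{m+l}$ (or $z^{m+l-1}$) over all $l\in\mathbb{Z}$ against $\lambda^{-l-m-1}$ produces the formal delta function $\delta(\lambda,z)=\sum_n z^n/\lambda^{n+1}$, while summing the factor $(\mu-\lambda)^m/m!$ against $\pa_z^m$ exponentiates to the shift operator $e^{(\mu-\lambda)\pa_z}$. The term without the sign gives a factor $\delta(\lambda,z)$, and after taking $res_z$ with $res_z(\delta(\lambda,z)f(z))=f(\lambda)$ followed by $e^{(\mu-\lambda)\pa_\lambda}$ acting on the $z^{-1}$-prefactor and on $\Psi_2$, one lands on $\lambda^{-1}\Psi_2(x,y;\mu^{-1})\otimes\Psi_2^*(x,y;\lambda^{-1})$. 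The sign-carrying term produces $\delta(-\lambda,z)$ (because $(-1)^{m+l}\lambda^{-l-m-1}=(-\lambda)^{-l-m-1}(-1)^{-1}$ up to bookkeeping), and here the crucial point distinguishing CTL from BTL is that the $z^{-1}$ prefactor stays attached to the argument $\Psi_2(x,y;z^{-1})$ rather than being differentiated, so after the residue and the shift one obtains $\mu^{-1}\Psi_2(x,y;-\lambda^{-1})\otimes\Psi_2^*(x,y;-\mu^{-1})$ — note the $\mu^{-1}$ rather than $\lambda^{-1}$, which is precisely the structural difference between (\ref{generatortodac2}) and (\ref{generatortodab2}).

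The first component ${Y}_1(\lambda,\mu)$ is handled identically, using $\chi(z)$ and $\Lambda$ in place of $\chi(z^{-1})$ and $\Lambda^{-1}$, yielding (\ref{generatortodac1}) with the same $\lambda^{-1}$ versus $\mu^{-1}$ split. The main obstacle, and the step deserving the most care, is the bookkeeping of the prefactors in the sign-carrying term: one must track how the power $z^{-1}$ interacts with the $m$-fold derivative $\pa_z^m$ under the exponentiated shift. In the BTL proof the derivative falls on $z^{m+l-1}\Psi_1$ so the shift acts on the whole product, whereas in the CTL case the factor $z^{-1}$ is outside the $\pa_z^m$ in $L^{l}M^m$, which is exactly what converts the leading $\lambda^{-1}$ into $\mu^{-1}$ after the shift $e^{(\mu-\lambda)\pa_z}$ evaluates $z$ at the delta-function support. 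Getting this interchange of the shift operator and the prefactor correct is the only genuinely nontrivial part; the remainder is a direct transcription of the BTL computation with $\Lambda\leftrightarrow\Lambda^{-1}$ and $J\leftrightarrow K$.
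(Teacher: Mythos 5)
Your overall strategy is the paper's: represent the two blocks of $A_{m,m+l}$ via Lemma \ref{lemma}, then sum over $l$ to form formal delta functions and over $m$ to form the shift $e^{(\mu-\lambda)\pa_z}$. Your first residue formula agrees with the paper and is correct. Your second one, however, is wrong, and it is exactly the step on which the CTL/BTL difference hinges. Since $L_2$ is a $z$-independent matrix it commutes with $\pa_z$, so applying $M_2^m$ first and $L_2^{m+l}$ afterwards inserts the eigenvalue $z^{m+l}$ \emph{inside} the $m$-fold derivative, while the factor $1/z$ coming from Lemma \ref{lemma} stays \emph{outside}:
\begin{equation*}
L_2^{m+l}M_2^m=res_z\, z^{-1}\pa_z^m\big(z^{m+l}{\Psi}_2(x,y;z^{-1})\big)\otimes{\Psi}_2^*(x,y;z^{-1}),
\end{equation*}
which is what the paper uses. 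Your formula has the two powers of $z$ interchanged, and because $[L_2,M_2]=(1,1)\neq0$ this is a genuinely different operator: undoing the dressing shows your right-hand side equals $L_2^{-1}M_2^mL_2^{m+l+1}=(M_2-L_2^{-1})^mL_2^{m+l}$, which for $m=1$, $l=0$ gives $L_2M_2-2$ instead of $L_2M_2$. The error is fatal to the subsequent computation: with $z^{m+l}$ sitting outside $\pa_z^m$, the $l$-sum produces $\delta(-\lambda,z)$ \emph{outside} the derivative, the residue is then taken at $z=-\lambda$, and the $m$-sum shifts the argument of $z^{-1}\Psi_2(z^{-1})$ to $\mu-2\lambda$; the sign-carrying term comes out proportional to $(\mu-2\lambda)^{-1}\Psi_2\big((\mu-2\lambda)^{-1}\big)\otimes\Psi_2^*(-\lambda^{-1})$ rather than the claimed $\mu^{-1}\Psi_2(-\lambda^{-1})\otimes\Psi_2^*(-\mu^{-1})$.

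Curiously, your prose contradicts your displayed formula and describes the correct structure: you write that ``the factor $z^{-1}$ is outside the $\pa_z^m$,'' which is true of the paper's representation but not of yours, where $z^{-1}$ sits inside and gets differentiated. With the corrected formula the mechanism you describe is exactly right: the prefactor $z^{-1}$, being unshifted, is evaluated at the support $z=-\mu$ of the shifted delta function and produces the $\mu^{-1}$ that distinguishes (\ref{generatortodac2}) from (\ref{generatortodab2}), whereas in the BTL case the prefactor lies inside the derivative, is shifted to $(z+\mu-\lambda)^{-1}$, and evaluates to $-\lambda^{-1}$. So the repair is to swap the two powers of $z$ in your second displayed formula (and in its first-component analogue); once that is done, your argument coincides with the paper's proof.
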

\begin{proof}
The same computations as the BTL hierarchy lead to
\begin{eqnarray*}
{M}_1^m{L}_1^{m+l}-(-1)^{m+l}{L}_1^{m+l}{M}_1^m
&=&res_z\Big(\big(z^{-1+m+l}\pa_z^m{\Psi}_1(x,y;z)\big)\otimes{\Psi}_1^*(x,y;z)\\
&&-(-1)^{m+l}z^{-1}\pa_z^m\big(z^{m+l}{\Psi}_1(x,y;z)\big)\otimes{\Psi}_1^*(x,y;z)\Big),\\
{M}_2^m{L}_2^{m+l}-(-1)^{m+l}{L}_2^{m+l}{M}_2^m
&=&res_z\Big(\big(z^{-1+m+l}\pa_z^m{\Psi}_2(x,y;z^{-1})\big)\otimes{\Psi}_2^*(x,y;z^{-1})\\
&&-(-1)^{m+l}z^{-1}\pa_z^m\big(z^{m+l}{\Psi}_2(x,y;z^{-1})\big)\otimes{\Psi}_2^*(x,y;z^{-1})\Big)
\end{eqnarray*}
and
\begin{eqnarray*}
{Y}_1(\lambda,\mu)&=&(\lambda^{-1}{\Psi}_1(x,y;\mu)\otimes{\Psi}_1^*(x,y;\lambda)-\mu^{-1}{\Psi}_1(x,y;-\lambda)\otimes{\Psi}_1^*(x,y;-\mu))_-,\\
{Y}_2(\lambda,\mu)
&=&(\lambda^{-1}{\Psi}_2(x,y;\mu^{-1})\otimes{\Psi}_2^*(x,y;\lambda^{-1})-\mu^{-1}{\Psi}_2(x,y;-\lambda^{-1})\otimes{\Psi}_2^*(x,y;-\mu^{-1}))_-.
\end{eqnarray*}
\end{proof}
\section{Conclusions and Discussions}
To summarize, we have constructed the additional symmetries for the BTL and CTL hierarchies in (\ref{btladdsym})(\ref{abtladdsymm}) and (\ref{ctladdsym})(\ref{actladdsymm}) respectively. And the additional symmetry flows on ${\Psi}$,
${L}$ and ${M}$ for the BTL and CTL hierarchies are given in Proposition \ref{propaddisymmtodab1}
and \ref{propaddisymmtodac1} respectively. When acting on the wave matrices ${W}$,
(a) $\pa_{x_{ml}^*}$ and $\pa_{y_{ml}^*}$ form a algebra of $w_\infty^B\times w_\infty^B$ in the case of the BTL hierarchy;
(b) For the CTL hierarchy, $\pa_{x_{ml}^*}$ and $\pa_{y_{ml}^*}$ form a algebra of
 $w_\infty^C\times w_\infty^C$. The generating functions of the additional symmetries for the
 BTL and CTL hierarchies are discussed in Proposition \ref{propaddisymmtodab5} and
 \ref{propaddisymmtodac5}, which may be helpful for the study of the ASvM formulas in the future.
 These results indicate again the essential difference between the BTL and CTL hierarchies from
 the point of view of the symmetry.

{\bf Acknowledgments}

{\noindent \small This work is supported by the NSF of China under
Grant No.10971109 \& 10971209, and K.C.Wong Magna Fund in Ningbo University.
Jingsong He is also supported by Program for NCET under Grant
No.NCET-08-0515.}

\end{document}